\newcommand{\mykron}{\widetilde{\otimes}}
\def\r{r_q}
\def\n{i}
\def\m{j}
\def\q{q}
\def\z{x}
\def\Z{\tilde{H}}  
\def\ni{l}
\def\no{p}
\def\Zthreefour{Z}
\def\Zthreethree{\Psi}
\def\Ztwofour{\Phi} 
\def\Hperm{H}
\def\vecv{\nu}
\def\vecu{\rho}
\def\E{\mathcal{I}}
\def\S{s}
\def\C{c}
\newtheorem{lem}{Lemma}
\newtheorem{thm}{Theorem}
\begin{document}
%
\title{Structure-Preserving Model Reduction for Nonlinear Power Grid Network
\\[1ex]
\thanks{{This work was supported in parts by National Science Foundation under
Grant No. DMS-1923221}}}


\author{Bita~Safaee\thanks{B.~Safaee is with Department of Mechanical Engineering at Virginia Tech, Blacksburg, VA 24061, (e-mail: bsafaee@vt.edu) } and Serkan~Gugercin 
\thanks{S.~Gugercin is with Department of Mathematics at Virginia Tech, Blacksburg, VA 24061 (e-mail: gugercin@vt.edu)}}

\maketitle

\begin{abstract}
We develop a structure-preserving system-theoretic model reduction framework for nonlinear power grid networks. First, via a lifting transformation,
we convert the original nonlinear system with trigonometric nonlinearities to an equivalent quadratic nonlinear model.  This equivalent representation allows us to employ the $\mathcal{H}_2$-based model reduction approach, Quadratic Iterative Rational Krylov Algorithm (Q-IRKA),
as an intermediate model reduction step. 
Exploiting the structure of the underlying power network model,
we show that the model reduction bases resulting from Q-IRKA have a special subspace structure, which allows us to effectively construct the final model reduction basis. This final basis is applied on the original nonlinear structure to yield a reduced model that preserves the physically meaningful (second-order) structure of the original model.
The effectiveness of our proposed framework is illustrated via two numerical examples. 

 \end{abstract}

\begin{IEEEkeywords}
Power network grids, lifting, nonlinear model reduction, structured model reduction,
$\mathcal{H}_2$-norm.
\end{IEEEkeywords}

%
\IEEEpeerreviewmaketitle

\section{Introduction}
{P}{ower} networks are complex and large-scale systems in operation.
 Simulation of these high dimensional power system models are computationally expensive and demand unmanageable levels of storage in dynamic simulation or trajectory sensitivity analysis.
Hence, reducing the order of these models is of a great importance especially for real time applications, stability analysis, and control design \cite{Chow2013}.

Numerous model order reduction (MOR) techniques have been applied to the linearized power networks models including balanced truncation \cite{Liu2009,CheridBettayeb1991,RamirezKamalasadan2016,SturkSandberg2014,LeungVillella2019}, Krylov subspace (interpolatory) methods \cite{ChaniotisPai2005,WangYu2013,WangDing2015,safaeeGugercin2021}, proper orthogonal decomposition (POD) \cite{WangPai2014}, singular perturbation theory \cite{PaiAdgaonkar1981,ChowSauer1990}, 
sparse representation of the system matrices \cite{LevronBelikov2017}, and clustering analysis \cite{ChengJ2018,Schaft2014,IshizakiAihara2015,mlinaric2015efficient}. For a comparative study of MOR techniques in power systems, see, e.g., \cite{BESSELINKSchilders2013, AnnakkageMartinez2012,Chow2013}.
Recently, development of MOR techniques for nonlinear systems including power networks has received great attention. 
In \cite{ParriloMarsden1999}, POD is employed to reduce the hybrid, nonlinear model of a power network for the study of cascading failures.
Trajectory piece-wise linearization (TPWL) method to approximate the nonlinear term in swing models together with POD to find a reduced order model for the swing dynamics was employed in \cite{MalikDiez2016}.
Balanced truncation based on empirical controllability and observability covariances of nonlinear power systems is proposed in \cite{QiDimitrovski2017}.
The empirical Gramians and balanced realization were also used in
\cite{ZhaoRen2017,ZhaoShi2013}
to  build a nonlinear power system model  suitable for controller design.
In \cite{PurvineWang2017}, real-time phasor measurements are used to cluster generators based on similar behaviors using recursive spectral bipartitioning and spectral clustering.
The reduced model is formed by representative generators chosen from each cluster.
A model reduction approach that adaptively switches between a hybrid system with selective linearization and a linear system based on the variations of the system state or size of a disturbance was developed in \cite{OsipovSun2018}. 
A structure-preserving, aggregation-based model order reduction framework based on synchronization properties of power systems
is introduced in \cite{MlinaricIshizaki2018}.
In \cite{WangZhou2012} an SVD-based approach for real-time dynamic model reduction with the ability of preserving  some nonlinearity in the reduced model is presented.
The recent work \cite{TobiasGrundel2020} presents the idea of transforming the swing equations into a quadratic system, lifts nonlinear swing equations to a quadratic one, and employs quadratic balanced truncation model order reduction technique.  A non-intrusive data-driven modeling framework  for  power  network  dynamics  using  Lift  and Learn \cite{QianWillcox2020} has been studied in \cite{safaeeGugercin22021}. 

In this paper, we focus on designing a structure-preserving and input-independent MOR framework for a special class of power grid networks. We obtain the reduction bases using the idea of lifting the nonlinear dynamics to quadratic models and employing $\mathcal{H}_2$-quasi-optimal model reduction for quadratic systems (Q-IRKA) \cite{BennerGoyalGugercin2018}. The reduction basis for the second-order nonlinear power network grids model is obtained with the help of the reduction bases obtained by Q-IRKA and their particular subspace structure.  This reduction basis is then applied on the second-order nonlinear power network grids model to form a structure-preserving reduced model.

The rest of the paper is organized as follows: Section \ref{sec:swing_model} briefly introduces the nonlinear dynamical model of the power grid network followed by a quick recap of projection-based
structure-preserving model reduction in this setting in 
Section \ref{sec:SO_MOR}. Section \ref{sec:quadratic_rep} illustrates how to lift the second-order dynamics of the power grid network to quadratic dynamics and presents an equilibrium analysis. In Section \ref{sec:MOR_Quad}, we describe the model reduction for quadratic systems via the  $\mathcal{H}_2$-based model reduction, Q-IRKA, followed by the required modification for the quadratized power grid model before employing  $\mathcal{H}_2$-based model reduction. In Section \ref{sec:proposed_approach}, we present our structured-preserving model reduction approach for power grid networks by proving and exploiting the specific structure arising in Q-IRKA for these dynamics. Section \ref{sec:numerical_result} illustrates the
feasibility of our approach via numerical examples and by comparing our approach with two other methods, followed
by conclusions in Section \ref{sec:conclusion}.

\section{power grid network model}\label{sec:swing_model}
{There are three most commonly used models for describing a dynamical model of a power grid network}: \textit{synchronous motor} ($SM$), \textit{effective network} ($EN$) and \textit{structure-preserving} ($SP$). 
Each model is described as a network of {$n$} coupled {oscillators (generators or loads)} whose dynamics at the {$i$th node} is expressed by 
the nonlinear second-order equation
\cite{NishikawaMotter2015}
\begin{align}\label{eq:oscillators}
\frac{2J_\n}{\omega_R}\ddot{\delta_\n}+\frac{D_\n}{\omega_R}\dot{\delta_\n} + \sum_{\substack{\m =1 , \m \ne \n}}^n K_{\n\m}\sin (\delta_\n - \delta_\m- \gamma_{\n\m})= B_\n, 
\end{align}
where $\delta_\n$ is the angle of rotation for the $\n$th oscillator; $J_\n$ and $D_\n$ are inertia and damping constants, respectively; $\omega_R$ is the angular frequency for the system; $K_{\n \m} \geq 0$ is dynamical coupling between oscillator $\n$ and $\m$; and $\gamma_{\n \m}$ is the phase shift in this coupling. Constants $B_\n$, $K_{\n \m}$ and $\gamma_{\n \m}$ are computed by solving the power flow equations and applying Kron reduction \cite{NishikawaMotter2015}.
{The three models mainly differ by the way they model  loads. The EN model assumes loads as constant impedances instead of oscillators. The SP model represents loads as first-order oscillators ($J_i=0$) while the SM model represent  loads as synchronous motors expressed as second-order oscillators.}

Define the state-vector $\delta = [\delta_1~\delta_2~\dots ~ \delta_n]^T \in \mathbb{R}^{n}$. 
Then, the dynamics of a network of $n$ coupled oscillators, given in
\eqref{eq:oscillators} for the $i$th node, can be described in the system form 
\begin{align}\label{eq:So_Dyn} 
     \mathcal{M} \ddot{\delta}(t) + \mathcal{D} \dot{\delta}(t) + f(\delta) &= \mathcal{B} u(t),\\
     y(t) &= \mathcal{C}\delta(t), \label{eq:So_Out}
\end{align}
where the matrices $\mathcal{M}$, $\mathcal{D} {\in \mathbb{R}}^{n  \times n}$, and $\mathcal{B}\in \mathbb{R}^{n}$  are defined as
\begin{align}
\begin{array}{rcl}
      \mathcal{M}   & \hspace{-1ex} = \hspace{-1ex} & \mathrm{diag}(m_1,\dots,m_n),~~  
     \mathcal{D}  = \mathrm{diag}(d_1,\dots,d_n),~  \\
   \mathcal{B}  & \hspace{-1ex} = \hspace{-1ex} & \begin{bmatrix}B_1, & B_2,&\ldots,& B_n
    \end{bmatrix}^T,
    \end{array}
\end{align}
where $m_i = \frac{2J_i}{\omega_R}$ and $d_i= \frac{D_{i}}{\omega_R}$, {$u(t) =1$} is input to the dynamics;
and the nonlinear map $f: \mathbb{R}^{n} \to \mathbb{R}^{n}$ is defined such that its $i$th component for $i=1,2,\ldots,n$ is 
\begin{align} \label{eq:fs}
    \left(f(\delta)\right)_i = \sum_{\substack{\m =1 , \m \ne \n}}^n K_{\n\m}\sin (\delta_\n - \delta_\m- \gamma_{\n\m}).
\end{align}
The variable $y(t) \in \mathbb{R}^\no$ in \eqref{eq:So_Out} corresponds to the output (quantity of interest) of the network dynamics,  described by the state-to-output matrix  $\mathcal{C} \in  \mathbb{R}^{\no \times n}$. 

\section{Structure-preserving MOR approach} \label{sec:SO_MOR}
Large scale power network dynamics, i.e., those with large $n$,
are computationally expensive to simulate and control. With the aid of the model reduction, we are able to replace these high dimensional models by lower dimensional models that can closely mimic the input-output behavior of the original high dimensional systems. 
To be more specific, we seek to develop a nonlinear structure-preserving model reduction approach such that the reduced model preserves the physically-meaningful second-order structure. In other words,
we directly reduce the second-order dynamics \eqref{eq:So_Dyn}-\eqref{eq:So_Out} to obtain a reduced second-order system with dimension $r \ll n $ and of the form
\begin{align} \label{eq:So_ROM}
     \mathcal{M}_r \ddot{\delta_r}(t) + \mathcal{D}_r \dot{\delta_r}(t) + f_r(\delta_r) & =  \mathcal{B}_r u(t),\\
     y_r(t) & =  \mathcal{C}_r\delta_r(t), \label{eq:So_ROMOut}
\end{align}
where $\delta_r \in \mathbb{R}^r$ is the reduced state;  $\mathcal{M}_r,\mathcal{D}_r \in \mathbb{R}^{r \times r}$;  $\mathcal{B}_r \in \mathbb{R}^{r \times 1}$; $f_r: \mathbb{R}^r \to \mathbb{R}^r$; and $\mathcal{C}_r \in \mathbb{R}^{\no \times r}$. The goal of this structure-preserving model reduction process is that $y_r(t)$ approximates the original output $ y(t)$ with high fidelity.

Given the full-order dynamics
\eqref{eq:So_Dyn}-\eqref{eq:So_Out}, the most common approach for constructing the reduced model \eqref{eq:So_ROM}-\eqref{eq:So_ROMOut}
is the Petrov-Galerkin projection framework: Construct two model reduction bases $\mathcal{W},\mathcal{V} \in \mathbb{R}^{n \times r}$  such that $\delta(t) \approx \mathcal{V} \delta_r(t)$. Insert  $\mathcal{V} \delta_r(t)$ into \eqref{eq:So_Dyn} and enforce a Petrov-Galerkin condition on the residual to obtain the reduced matrices in \eqref{eq:So_ROM} and \eqref{eq:So_ROMOut} as
\begin{align} \label{eq:So_ROM_matrices}
\begin{array}{rcl}
  \mathcal{M}_r & \hspace{-1ex} = \hspace{-1ex} & \mathcal{W}^T \mathcal{M} \mathcal{V}, \ \mathcal{D}_r = \mathcal{W}^T \mathcal{D} \mathcal{V},  \\ 
f_r(\delta_r)  & \hspace{-1ex} = \hspace{-1ex} & \mathcal{W}^T f(\mathcal{V}\delta_r), \ \mathcal{B}_r = \mathcal{W}^T \mathcal{B} ,\ \mbox{and}\  \mathcal{C}_r = \mathcal{C} \mathcal{V}.
\end{array}
\end{align}
To preserve the symmetry and positive definiteness of $\mathcal{M}$ and $\mathcal{D}$ in \eqref{eq:So_ROM}, we employ a Galerkin projection by setting $\mathcal{W} = \mathcal{V}$.  
However, unlike the usual MOR via Galerkin projection  where the subspace $\mathcal{V}$ (and thus $\mathcal{W}$) only contains information from the dynamics equation 
\eqref{eq:So_Dyn} and ignores the output equation 
\eqref{eq:So_Out}, our one-sided Galerkin projection will contain information from both \eqref{eq:So_Dyn} and \eqref{eq:So_Out} by taking advantage of the special structure arising in the network dynamics. These issues are discussed in detail in Section \ref{sec:proposed_approach}. 
The  term $f_r(\delta_r)$ in
\eqref{eq:So_ROM_matrices} is usually further approximated to resolve the lifting bottleneck \cite{ChaturantabutSorensen2010}. Due to the page limit, we skip this detail here.

The accuracy of the structure-preserving reduced model \eqref{eq:So_ROM} depends on the proper choice of $\mathcal{V}$.
For general nonlinearities, POD is usually the method of choice.
However, for special classes of nonlinear systems such as bilinear and quadratic-bilinear systems, one can indeed use well-established system theoretical methods; see, e.g., \cite{BennerGoyalGugercin2018, BennerGoyal2017, BennerBreiten2012_2}.
A large class of nonlinear systems,
such as those with smooth nonlinearities, e.g.,  exponential, trigonometric, can indeed be represented as quadratic-bilinear systems by introducing new variables (in a lifting map) \cite{mccormick1976computability}, \cite{Gu2011}, \cite{BennerTobias2015}, \cite{TobiasGrundel2020}, {\cite{QianWillcox2020}}.
Converting a nonlinear model to its equivalent quadratic form provides an opportunity to perform input-independent model reduction techniques where a system-theoretic norm is defined \cite{BennerGoyal2017,BennerGoyalGugercin2018}. Although this transformation is not unique, it is exact in many cases (as in here), i.e., there is no approximation error {\cite{BennerGoyal2017}}. 

Second-order model \eqref{eq:oscillators} inherently contains a quadratic nonlinearity in $f(\delta)$ and  the nonlinear system \eqref{eq:So_Dyn} can be lifted to a quadratic dynamic as we will see next in Section \ref{sec:quadratic_rep}.
 However, as stated in section \eqref{sec:SO_MOR}, we seek to obtain a reduced second-order model, not a reduced quadratic model. Therefore, by exploiting the inherent structure of the model reduction bases obtained for the quadratic model, we form the final reduction base $\mathcal{V}$ to construct the final structure 
 preserving reduced model \eqref{eq:So_ROM_matrices}. This is achieved in Sections \ref{sec:MOR_Quad}--\ref{sec:proposed_approach}.
\section{Quadratic representation}\label{sec:quadratic_rep}
{As we mentioned in Section \ref{sec:SO_MOR} for systems with quadratic nonlinearities,  effective and in some cases optimal, systems-theoretic MOR methods already exist. Therefore,
in obtaining high-fidelity reduced models,
we can significantly benefit from a quadratic representation of the nonlinear swing dynamics.}

Towards this goal, recall the original second-order dynamics \eqref{eq:oscillators}. {Expand the nonlinearity $\sin (\delta_\n - \delta_\m- \gamma_{\n\m})$ using the trigonometric identity} and introducing $s:= \sin$ and $c := \cos$
\begin{align}\label{eq:trigonometric}
     \S(\delta_\n -  \delta_\m- \gamma_{\n\m})  & =   (\S(\delta_\n)\C(\delta_\m)-\C(\delta_\n)\S(\delta_\m))\C(\gamma_{\n\m})\\
   & -   (\C(\delta_\n)\C(\delta_\m)+\S(\delta_\n)\S(\delta_\m))\S(\gamma_{\n\m}). \notag
\end{align}
Note that equation \eqref{eq:trigonometric} is quadratic in $\sin(\delta_i)$ and $\cos(\delta_i)$. This directly hints at choosing $\sin(\delta)$ and $\cos(\delta)$ as auxiliary variables in a new state vector to convert the nonlinear dynamics to a quadratic one. This is precisely what has been done in  \cite{TobiasGrundel2020} to find a quadratic representation  
for \eqref{eq:So_Dyn}. However, since the transformation, the so-called lifting, is not unique and we have a slightly different form in the resulting state-space form, we include its derivation here for completeness. 

\begin{lem}
Given the second-order dynamics \eqref{eq:So_Dyn}, define the new lifted state vector $\q(t) \in \mathbb{R}^{4n}$ as
\begin{align}\label{eq:vector_x}
  \q(t)&= \begin{bmatrix}
 \delta(t)^T & \dot{\delta}(t)^T & \sin(\delta(t))^T & \cos(\delta(t))^T  \end{bmatrix}^T \notag\\
 &= \begin{bmatrix}
q_1^T&q_2^T&q_3^T&q_4^T
\end{bmatrix}^T.
\end{align}
For the new state $q(t)$ in \eqref{eq:vector_x}, the original dynamics  \eqref{eq:So_Dyn} can be written exactly as a quadratic nonlinear system of the form
\begin{align} \label{eq:permuted_quad}
\begin{array}{rcl}
 E\dot \q(t) & \hspace{-1ex} = \hspace{-1ex} & A\q(t) + \Hperm(\q(t) \otimes \q(t)) + Bu(t) \\
 y(t) & \hspace{-1ex} = \hspace{-1ex} & C\q(t), 
 \end{array}
\end{align}
where $E, A$ $\in \mathbb{R}^{4n \times 4n}$, $B \in \mathbb{R}^{{4n \times 1}}$ , $C \in \mathbb{R}^{\no \times 4n}$,  $H \in \mathbb{R}^{4n \times (4n)^2}$, and $\otimes$ denotes the Kronecker product.
 \end{lem}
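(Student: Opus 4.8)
The plan is to give a constructive proof: I will differentiate each of the four blocks of the lifted state $q(t)$ in \eqref{eq:vector_x} with respect to time, classify every resulting contribution as constant, linear, or quadratic in $q$, and then read off the matrices $E$, $A$, $B$, $C$, and $H$ block by block. Because the lifting is not unique, it suffices to exhibit one valid set of matrices reproducing \eqref{eq:So_Dyn}--\eqref{eq:So_Out} exactly, so the verification reduces to checking that each block equation is matched componentwise.

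First I would handle the two ``kinematic'' blocks. Block one gives $\dot q_1 = \dot\delta = q_2$, which is purely linear and contributes an identity coupling from $q_2$ into the first block row of $A$. For the third and fourth blocks I would apply the chain rule componentwise: $(\dot q_3)_i = \cos(\delta_i)\dot\delta_i = (q_4)_i (q_2)_i$ and $(\dot q_4)_i = -\sin(\delta_i)\dot\delta_i = -(q_3)_i (q_2)_i$. Each is a product of two entries of $q$, hence a purely quadratic contribution with no constant or linear part; these populate the corresponding block rows of $H$ through the entries of $q \otimes q$ indexed by the pairs $(n+i,\,3n+i)$ and $(n+i,\,2n+i)$.

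Next I would treat the dynamic block. Rearranging the second-order equation \eqref{eq:So_Dyn} gives $\mathcal{M}\dot q_2 = \mathcal{B}u - \mathcal{D}q_2 - f(\delta)$, which forces the only nontrivial diagonal block $\mathcal{M}$ of $E$, the damping block $-\mathcal{D}$ in $A$, and the input vector $\mathcal{B}$ in $B$. The nonlinearity $f(\delta)$ in \eqref{eq:fs} is turned into a quadratic form by substituting the identity \eqref{eq:trigonometric}: since $\sin\delta_i = (q_3)_i$ and $\cos\delta_i = (q_4)_i$, every summand $\sin(\delta_i-\delta_j-\gamma_{ij})$ becomes a weighted combination of the four products $(q_3)_i(q_4)_j$, $(q_4)_i(q_3)_j$, $(q_4)_i(q_4)_j$, $(q_3)_i(q_3)_j$ with coefficients $\pm\cos\gamma_{ij}$ and $\pm\sin\gamma_{ij}$. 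Thus the $i$th component of $f$ is a fixed linear combination of entries of $q \otimes q$, and $-f(\delta)$ is absorbed into the second block row of $H$. Collecting everything yields $E = \mathrm{diag}(I_n,\mathcal{M},I_n,I_n)$, which is invertible because $\mathcal{M}$ is diagonal and positive definite, together with $A$, $B = [0,\mathcal{B}^T,0,0]^T$, and $C = [\mathcal{C},0,0,0]$ read off from the output equation \eqref{eq:So_Out}.

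The main bookkeeping obstacle is the explicit assembly of $H \in \mathbb{R}^{4n \times (4n)^2}$: one must translate the componentwise products above into the correct columns of $q \otimes q$, where the entry $(q)_a(q)_b$ sits in column $(a-1)4n + b$. Particular care is needed for the double sum in $f(\delta)$, whose $(i,j)$ cross terms place the $\gamma_{ij}$-weighted entries off the ``diagonal'' column pattern used by blocks three and four. I would record one convenient (not necessarily symmetric) choice of $H$ and then simply verify that $H(q \otimes q)$ reproduces the stacked right-hand side of the four block equations, which completes the construction and hence the proof.
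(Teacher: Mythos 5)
Your proof is correct and takes essentially the same constructive route as the paper: chain rule for the $\sin/\cos$ blocks, the trigonometric identity \eqref{eq:trigonometric} to quadratize $f$, and reading off $E=\mathrm{blkdiag}(I,\mathcal{M},I,I)$, $B=[0,\ \mathcal{B}^T,0,0]^T$, $C=[\mathcal{C},0,0,0]$, and $A$, with $H$ assembled from the resulting quadratic products. The only substantive difference is the bookkeeping of $H$: the paper works in a permuted Kronecker ordering and deliberately splits each cross term symmetrically (e.g., writing $\dot q_3 = \frac{\Phi}{2}(q_2\otimes q_4)+\frac{\Phi}{2}(q_4\otimes q_2)$), a choice that is immaterial for this lemma but is precisely what yields the tensor symmetry of Lemma \ref{lem:Hsymmetric} exploited later in Theorem \ref{rankoneW}, whereas your explicitly ``not necessarily symmetric'' choice of $H$, while perfectly valid here, would not directly support those downstream arguments.
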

\begin{proof}
The proof is constructive, i.e., it will show how the matrices 
$E,A,H,B$ and $C$ in \eqref{eq:permuted_quad} will be constructed. 
Using the new state vector $q$ in \eqref{eq:vector_x}, we can directly rewrite the original output $y(t) = \mathcal{C} \delta(t)$ as
$y(t) = C q(t)$ as in \eqref{eq:permuted_quad}
with
\begin{align}\label{eq:c_q}
    C =\begin{bmatrix}
          \mathcal{C} & 0 & 0 & 0
    \end{bmatrix} \in \mathbb{R}^{\no \times 4n}.
\end{align}
It follows from \eqref{eq:So_Dyn} and \eqref{eq:vector_x}  that
{
\begin{align} \label{eq:dotq1}
    \dot{q}_1 & = q_2 , \\ 
   \mathcal{M} \dot{q}_2 & = - \mathcal{D} q_2 - f(q_1) + \mathcal{B} u ,\label{eq:dotq2} \\ 
   \dot{q}_3 & = q_2 \circ q_4 \label{eq:dotq3} , \\ 
   \dot{q}_4 & = - q_2 \circ q_3 , \label{eq:dotq4}
\end{align}
where $\circ$ denotes the Hadamard product}.

We start by writing $\dot{q_3}$ in \eqref{eq:dotq3} and $\dot{q}_4$ in \eqref{eq:dotq4} as
\begin{align}\label{eq:dotq3_dotq4}
\dot{q}_3 =  \Phi (q_2 \otimes q_4)\quad \mbox{and} \quad
\dot{q}_4 =  -\Phi (q_2 \otimes q_3),
\end{align}
where  $\Ztwofour = \begin{bmatrix}
    \Ztwofour_1 & \Ztwofour_2 & \dots & \Ztwofour_n
    \end{bmatrix} \in \mathbb{R}^{n \times n^2}$ and
  $\Ztwofour_k \in \mathbb{R}^{n \times n}, \text{for}~ k=1,\ldots,n,$ 
 has zero entries except for the  $k$th diagonal element, which is $1$. 
Next, we write  the nonlinearity  $f(q_1)$  in \eqref{eq:dotq2}  as a quadratic term.
{Towards this goal, we use 
\eqref{eq:trigonometric} in \eqref{eq:fs} to obtain
{\small
\begin{align}
       (f (q_1))_i =  \notag  &\sum_{\substack{\m =1 , \m \ne \n}}^n K_{\n\m}\left( (q_{3})_i (q_{4})_j - (q_{4})_i (q_{3})_j \right) \cos (\gamma_{\n\m}) \\
    & - \sum_{\substack{\m =1,  \m \ne \n}}^n K_{\n\m}((q_4)_i (q_{4})_j)\sin (\gamma_{ij}) \label{eq:fs_q} \\
    & - \sum_{\substack{\m =1 \notag , \m \ne \n}}^n K_{\n\m}\left( (q_{3})_i (q_{3})_j \right)\sin(\gamma_{\n\m}).
\end{align}
}}
Using \eqref{eq:fs_q}, we can write the vector $f(q_1)$ compactly as
\begin{align}\label{eq:fs_q_quad}
  f (q_1) = -\left( \Zthreefour(\q_{3} \otimes \q_{4}) + \Zthreethree(\q_{4} \otimes \q_{4})+ \Zthreethree(\q_{3} \otimes \q_{3}) \right) ,
\end{align}
where $\Zthreefour = \begin{bmatrix}
    \Zthreefour_1 & \Zthreefour_2 & \dots & \Zthreefour_n
    \end{bmatrix} \in \mathbb{R}^{n \times n^2}$
    consists of $n$ block matrices $Z_k \in \mathbb{R}^{n \times n}$ defined as
{\small
\begin{align}\label{eq:Zk}
    & \Zthreefour_k(i,j) = \Bigg\{ \begin{matrix}
    K_{kj}\cos{(\gamma_{kj})}, &  i=j=1,...,n , \ j \ne k\\ -K_{kj}\cos{(\gamma_{kj})}, & i = k, j=1,...,n , \ j \ne k\\
    0 & \text{otherwise}
    \end{matrix}.
    \end{align}
}Similarly, we have $\Zthreethree = \begin{bmatrix}
    \Zthreethree_1 & \Zthreethree_2 & \dots & \Zthreethree_n
    \end{bmatrix}\in \mathbb{R}^{n \times n^2}$
and for the $k^{th}$ block of $\Psi$, denoted by $\Psi_k \in \mathbb{R}^{n \times n}$, we have
{\small
\begin{align} \label{eq:Psik}
    & \Zthreethree_k(i,j) = \Bigg\{ \begin{matrix}
    -\frac{1}{2}{K_{kj}}\sin{(\gamma_{kj})}, &  i=j=1,...,n , \ j \ne k\\ -\frac{1}{2}{K_{kj}}\sin{(\gamma_{kj})} ,& i = k, j=1,...,n , \ j \ne k\\
    0 & \text{otherwise}
    \end{matrix}.
    \end{align}
}
{Using \eqref{eq:dotq3_dotq4} and \eqref{eq:fs_q_quad}, we  now rewrite \eqref{eq:dotq1}-\eqref{eq:dotq4} as}
\begin{align} \label{eq:dotq_rewriteq1}
    \dot{q}_1  & =  q_2 ,  \\ 
  \mathcal{M} \dot{q}_2 & =  - \mathcal{D} q_2 + \Zthreefour(\q_{3} \otimes \q_{4}) ,\notag
  \\ &~\qquad\qquad +\Zthreethree(\q_{4} \otimes \q_{4})+ \Zthreethree(\q_{3} \otimes \q_{3}) \label{eq:dotq_rewriteq2} + \mathcal{B} u, \\
  \dot{q}_3 & =  \Phi (q_2 \otimes q_4), \label{eq:dotq_rewriteq3} \\ 
  \dot{q}_4 & =   -  \Phi (q_2 \otimes q_3). \label{eq:dotq_rewriteq4}
\end{align}
Note that the new formulation of the dynamics in \eqref{eq:dotq_rewriteq1}-\eqref{eq:dotq_rewriteq4}
only contains linear and quadratic terms in $q$, and 
the input mapping is linear. The output mapping $C$ is already established in \eqref{eq:c_q}.
Therefore, 
there exist matrices $E,A,H,B$ and $C$ such that the original power network dynamics can be written as a quadratic nonlinear system as  in \eqref{eq:permuted_quad}, thus proving the desired result. However, to give a  constructive proof, we continue to  show how the matrices in \eqref{eq:permuted_quad} can be explicitly constructed.

It immediately follows from \eqref{eq:dotq_rewriteq1}-\eqref{eq:dotq_rewriteq4} that $E \in \mathbb{R}^{4n \times 4n} $, $A \in \mathbb{R}^{4n \times 4n} $, and $B \in \mathbb{R}^{4n}$ in \eqref{eq:permuted_quad} (corresponding to the linear parts of the dynamics) are given by
 \begin{align}\label{eq:b_q}
E = &~\text{blkdiag}(I,\mathcal{M},I,I),  \\
\label{eq:a_q}
\quad B & =
    \begin{bmatrix}
    0 \\ \mathcal{B} \\ 0 \\ 0
    \end{bmatrix}, \mbox{and}\quad
  A =  \begin{bmatrix} 0 & I & 0 & 0\\ 0 & -\mathcal{D} & 0 & 0\\0 & 0  & 0 & 0\\0 & 0 & 0 & 0\end{bmatrix},
 \end{align}
where $I$ denotes the identity matrix (of appropriate size).
To show how $\Hperm$ is constructed, we 
define a revised  Kronecker product as
\begin{align}\label{eq:ourKron}
\q \, \mykron \, \q = [\q_{1} \, \mykron \, \q;\ \q_{2}\, \mykron\, \q;\ \q_{3}\, \mykron\, \q;\ \q_{4}\, \mykron\, \q] ,
\end{align}
where
$\q_{i}\, \mykron\, \q = \begin{bmatrix}
\q_{i}\otimes \q_{1};\q_{i}\otimes \q_{2};\q_{i} \otimes \q_{3}; \q_{i} \otimes \q_{4}
\end{bmatrix}$, for $i=1,\dots,4$.
Clearly, \eqref{eq:ourKron} is  a permuted form of the regular Kronecker product. It is introduced here to  make the derivation of $\Hperm$ easier.  
The quadratic terms \eqref{eq:dotq_rewriteq2}-\eqref{eq:dotq_rewriteq4}  can be written as
\begin{align}
    \Z (q\,\mykron \,q)\quad\mbox{where}\quad\Z \in \mathbb{R}^{4n \times (4n)^2}.
\end{align}
Decompose $\Z$ into four sub-matrices:
\begin{align}\label{eq:H_tiled}
    \Z = \begin{bmatrix}
    \Z_1 & \Z_2 & \Z_3 & \Z_4
    \end{bmatrix},
\end{align}
{where $\Z_k \in \mathbb{R}^{4n \times 4n^2}$ for $k=1,2,3,4$. The first submatrix $\Z_1$ corresponds to the first block of \eqref{eq:ourKron}, i.e, $
q_1 \tilde{\otimes}\, q = \begin{bmatrix}\q_{1}\otimes \q_{1};\q_{1}\otimes \q_{2};\q_{1} \otimes \q_{3}; \q_{1} \otimes \q_{4}
\end{bmatrix}$. There is no $\q_{1} \otimes \q_{i}$ term in \eqref{eq:dotq_rewriteq1}-\eqref{eq:dotq_rewriteq4}. Thus, we set $\Z_1 = 0$.

Similarly, the matrix $\Z_2$ corresponds to the second block of \eqref{eq:ourKron}, i.e, $q_2 \tilde{\otimes} q$. We can see from \eqref{eq:dotq_rewriteq1}-\eqref{eq:dotq_rewriteq4} that two terms, namely $\q_{2} \otimes \q_{3}$ and $ \q_{2} \otimes \q_{4}$, exist in \eqref{eq:dotq_rewriteq3} and \eqref{eq:dotq_rewriteq4}, respectively. One can rewrite \eqref{eq:dotq_rewriteq3} as $\dot{q}_3  =  \frac{\Phi}{2} (q_2 \otimes q_4) + \frac{\Phi}{2} (q_4 \otimes q_2)$ and allocate  
the first term, i.e., $\frac{\Phi}{2} (q_2 \otimes q_4)$ 
in $\Z (q\,\mykron\,q)$, to $\Z_2$. More specifically and by using the MATLAB notation for row and column indices, we set 
the $\Z_2(2n+1:3n,3n^2+1:4n^2) = \frac{\Phi}{2}$.  (The second term  in $\dot{q}_3$, i.e.,$\frac{\Phi}{2} (q_4 \otimes q_2)$,  will be matched by $\Z_4$ below). Similarly, we can write \eqref{eq:dotq_rewriteq4} as $\dot{q}_4  =  -\frac{\Phi}{2} (q_2 \otimes q_3) - \frac{\Phi}{2} (q_3 \otimes q_2)$ and allocate the first term to  $\Z_2$ by setting
$\Z_2(3n+1:4n,2n^2+1:3n^2) = -\frac{\Phi}{2}$ (the second term will be matched via $\Z_3$). Thus, we set 
{\small \begin{align}\label{eq:H2}
    \Z_2 = \begin{bmatrix}
    0 & 0 & 0 & 0\\
    0 & 0 & 0 & 0\\
    0 & 0 & 0 & \frac{\Ztwofour}{2}\\
    0 & 0 & \frac{-\Ztwofour}{2} & 0
    \end{bmatrix}.
\end{align}}Following similar arguments, we obtain}
{\small
\begin{align}\label{eq:H}
& \Z_3 = \begin{bmatrix}
    0 & 0 & 0 & 0\\
    0 & 0 & {\Zthreethree} & \frac{\Zthreefour}{2}\\
    0 & 0 & 0 & 0\\
    0 & \frac{-\Ztwofour}{2} & 0 & 0
    \end{bmatrix},~\Z_4 = \begin{bmatrix}
    0 & 0 & 0 & 0\\
    0 & 0 & \frac{-\Zthreefour}{2} & {\Zthreethree}\\
    0 & \frac{\Ztwofour}{2} & 0 & 0\\
    0 & 0 & 0 & 0
        \end{bmatrix}.
\end{align}
}
{Since \eqref{eq:ourKron} is a permuted form of $(\q \otimes \q)$, there exists a permutation matrix
$P_q \in \mathbb{R}^{n^2 \times n^2}$ such that}
  \vspace{-.2cm}
 \begin{align}\label{eq:kronecker_rel}
       (\q \, \mykron \, \q) =  P^T(\q \otimes \q),
 \end{align} 
where
 $P = \text{blkdiag}\left(P_q,P_q,P_q,P_q\right)$. Then, $\Hperm$ in \eqref{eq:permuted_quad} is
 \begin{align} \label{eq:finalHperm}
     \Hperm = \Z P^T.
 \end{align}
\end{proof}
\subsection{Equilibrium analysis}
{So far we have represented the power network dynamics in two formats: as second-order dynamics in \eqref{eq:So_Dyn}-\eqref{eq:So_Out} and as quadratic dynamics in \eqref{eq:permuted_quad}. In this section, we  investigate how the lifting transformation \eqref{eq:vector_x} affects the equilibrium points.}

To find the equilibrium points of \eqref{eq:So_Dyn}, first we transform  it to a first-order form by defining a state vector {$\mathcal{X}(t) = [\mathcal{X}_1(t)^T ~ \mathcal{X}_2(t)^T]^T = [\delta(t)^T ~ \dot{\delta}(t)^T]^T \in \mathbb{R}^{2n}$}  to obtain
{\small
\begin{equation}\label{eq:first_rewrite}
 \dot{\mathcal{X}}(t)=
  \begin{bmatrix}
0 & I\\0 & -\mathcal{M}^{-1}\mathcal{D}
\end{bmatrix}\mathcal{X}(t) + 
  \begin{bmatrix}
0\\-\mathcal{M}^{-1}f(\mathcal{X}_1) 
\end{bmatrix}
 + \begin{bmatrix}
 0\\ \mathcal{M}^{-1}\mathcal{B} \end{bmatrix},
\end{equation}
}

\noindent using
$u(t)=1$. Let $
    \mathcal{X}^* = \begin{bmatrix}
    {\mathcal{X}_1^*}^T & {\mathcal{X}_2^*}^T
    \end{bmatrix}^T$ be an equilibrium point of \eqref{eq:first_rewrite}.
By setting $\dot{\mathcal{X}}(t)=0$, we  obtain that  $\mathcal{X}^*$ satisfies
\begin{align}\label{eq:Eq_So} 
    \mathcal{X}^* = \begin{bmatrix}
    \mathcal{X}_1^* \\ \mathcal{X}_2^*
    \end{bmatrix} = \begin{bmatrix}
    \delta^* \\ 0
    \end{bmatrix}~\mbox{where}~~f(\delta^*) = \mathcal{B}.
\end{align}
\begin{lem}
Let $\mathcal{X}^* = \begin{bmatrix}
    {\delta^*}^T & 0^T
    \end{bmatrix}^T$ be an equilibrium point of 
 the original dynamics \eqref{eq:first_rewrite}. Then,
 \begin{align}\label{eq:EQ_quad}
    q^* = \begin{bmatrix}
    {\delta^*}^T & {0}^T & {(\sin{(\delta^*)})}^T & {(\cos{(\delta^*)})}^T
    \end{bmatrix}^T 
\end{align}
is an equilibrium point of \eqref{eq:permuted_quad}.
Similarly, let 
\begin{align}\label{eq:EQ_quad2}
    q^* & =  \begin{bmatrix}
    {q_1^{*}}^T& {q_2^*}^T& {q_3^*}^T& {q_4^*}^T
    \end{bmatrix}^T 
\end{align} 
be an equilibrium point of \eqref{eq:permuted_quad}. Then, $\q_2^*=0$
and 
$\mathcal{X}^* = \begin{bmatrix}
    {\q_1^*}^T & 0^T
    \end{bmatrix}^T$
    is an equilibrium point of \eqref{eq:first_rewrite}.
\end{lem}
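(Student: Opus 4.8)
The plan is to verify both directions by reading the equilibrium conditions off the component form \eqref{eq:dotq_rewriteq1}--\eqref{eq:dotq_rewriteq4} of the quadratic dynamics rather than from the assembled matrices $E,A,\Hperm$. Setting $\dot q = 0$ with $u=1$, a point $q^* = [{q_1^*}^T\ {q_2^*}^T\ {q_3^*}^T\ {q_4^*}^T]^T$ is an equilibrium of \eqref{eq:permuted_quad} if and only if
\begin{align}
0 &= q_2^*,\\
0 &= -\mathcal{D}q_2^* + Z(q_3^*\otimes q_4^*) + \Psi(q_4^*\otimes q_4^*) + \Psi(q_3^*\otimes q_3^*) + \mathcal{B},\\
0 &= \Phi(q_2^*\otimes q_4^*), \qquad 0 = -\Phi(q_2^*\otimes q_3^*).
\end{align}
This reformulation is what makes both implications transparent.

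For the forward direction, I substitute $q^* = [{\delta^*}^T\ 0^T\ (\sin\delta^*)^T\ (\cos\delta^*)^T]^T$. Because $q_2^* = 0$, the first, third, and fourth conditions hold trivially. For the second, I invoke the quadratic representation \eqref{eq:fs_q_quad} of the nonlinearity: evaluated at $q_3^* = \sin(\delta^*)$ and $q_4^* = \cos(\delta^*)$, the three bracketed terms sum to exactly $-f(\delta^*)$. Using the original equilibrium relation $f(\delta^*) = \mathcal{B}$ from \eqref{eq:Eq_So}, the second condition becomes $-f(\delta^*) + \mathcal{B} = 0$, which holds. Hence $q^*$ is an equilibrium of \eqref{eq:permuted_quad}.

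For the converse, the first condition immediately yields $q_2^* = 0$; the third and fourth conditions then hold automatically and carry no further information. What remains is to turn the second condition, $Z(q_3^*\otimes q_4^*) + \Psi(q_4^*\otimes q_4^*) + \Psi(q_3^*\otimes q_3^*) = -\mathcal{B}$, into the original equilibrium relation $f(q_1^*) = \mathcal{B}$ required by \eqref{eq:Eq_So}.

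Here lies the main obstacle, and it is the only genuinely delicate point. In the lifted model the coordinates $q_3$ and $q_4$ are formally independent states, so an arbitrary equilibrium need not satisfy the algebraic constraints $q_3^* = \sin(q_1^*)$ and $q_4^* = \cos(q_1^*)$: the vanishing of $\dot q_3$ and $\dot q_4$ in \eqref{eq:dotq1}--\eqref{eq:dotq4} only forces $q_2^* = 0$ and imposes nothing on $q_3^*, q_4^*$. I would therefore restrict to equilibria consistent with the lifting map \eqref{eq:vector_x}, which is the physically relevant class: since the constraints $q_3 = \sin(q_1)$, $q_4 = \cos(q_1)$ are invariant under the flow \eqref{eq:dotq1}--\eqref{eq:dotq4}, any equilibrium obtained by lifting a trajectory of \eqref{eq:first_rewrite} lies on this manifold. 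Under this consistency, \eqref{eq:fs_q_quad} identifies the left-hand side of the second condition with $-f(q_1^*)$, so the condition reads $-f(q_1^*) + \mathcal{B} = 0$, i.e., $f(q_1^*) = \mathcal{B}$. Together with $q_2^* = 0$, this is precisely the characterization \eqref{eq:Eq_So} of an equilibrium of \eqref{eq:first_rewrite}, so $\mathcal{X}^* = [{q_1^*}^T\ 0^T]^T$ is an equilibrium of the original dynamics, completing the proof.
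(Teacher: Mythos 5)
Your proof is correct, but it takes a genuinely different route from the paper's. The paper argues abstractly through the lifting map $\mathcal{T}$ of \eqref{eq:q_lifting_map}: differentiating $q=\mathcal{T}(\mathcal{X})$ gives $\dot q = J_{\mathcal{T}}\mathcal{F}(\mathcal{X})$ as in \eqref{eq:jacobian_eq}, so $\mathcal{F}(\mathcal{X}^*)=0$ immediately yields $\dot q=0$ (forward direction), while the converse follows from the full column rank of the Jacobian \eqref{eq:jacobian}, with $q_2^*=0$ read off the second block row of $\mathcal{F}(\mathcal{X}^*)=0$. You instead verify the equilibrium conditions componentwise from \eqref{eq:dotq_rewriteq1}--\eqref{eq:dotq_rewriteq4}, using the quadratic representation \eqref{eq:fs_q_quad} of $f$ together with the characterization \eqref{eq:Eq_So}. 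Each approach buys something: the paper's chain-rule/rank argument is shorter and works verbatim for any smooth lifting, never touching the matrices $Z$, $\Psi$, $\Phi$; your computation is more elementary and, importantly, makes explicit a point the paper leaves implicit. As you observe, the equilibrium equations of \eqref{eq:permuted_quad} do not constrain $q_1^*$ at all and do not force $(q_3^*,q_4^*)=(\sin q_1^*,\cos q_1^*)$; consequently the converse, read literally for arbitrary equilibria of \eqref{eq:permuted_quad}, fails (e.g., shifting $q_1^*$ of a lifted equilibrium by a nonzero constant leaves it an equilibrium of \eqref{eq:permuted_quad}, yet $f(q_1^*)=\mathcal{B}$ generally no longer holds). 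The paper's converse carries the same hidden hypothesis: the identity $\mathcal{G}(q^*)=J_{\mathcal{T}}\mathcal{F}(\mathcal{X}^*)$ it invokes is only meaningful when $q^*=\mathcal{T}(\mathcal{X}^*)$, i.e., when $q^*$ lies on the lifted manifold. So your restriction to equilibria consistent with the lifting is not a defect of your argument but a hypothesis both proofs need; yours has the virtue of naming it, and of noting the flow-invariance of the constraint manifold that makes this the physically relevant class.
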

\begin{proof}
Rewrite \eqref{eq:first_rewrite} as
\begin{align}\label{eq:general_firstorder}
    \dot{\mathcal{X}} = \mathcal{F}(\mathcal{X}),
\end{align}
where $\mathcal{F}: \mathbb{R}^{2n} \to \mathbb{R}^{2n}$. If $\mathcal{X}^*$ in \eqref{eq:Eq_So} is an equilibrium of \eqref{eq:general_firstorder}, then we have
    $\mathcal{F}(\mathcal{X}^*)=0$.
Now, {let us write \eqref{eq:vector_x}} as
\begin{align}\label{eq:q_lifting_map}
    q = \mathcal{T}(\mathcal{X}),
\end{align}
where $\mathcal{T}$ is the quadratic lifting in \eqref{eq:vector_x}, given by
\begin{align} \label{eq:lifting_map}
 \mathcal{T} : \mathcal{X} = \left(  \begin{matrix}
\mathcal{X}_1 \\ \mathcal{X}_2
 \end{matrix}  \right) \rightarrow q =\left(  \begin{matrix}
\mathcal{X}_1 \\ \mathcal{X}_2 \\ \sin{(\mathcal{X}_1)}\\ \cos{(\mathcal{X}_1)}
 \end{matrix}  \right) .
\end{align}
Differentiate \eqref{eq:q_lifting_map} to obtain
\begin{align}\label{eq:jacobian_eq}
    \dot{q}   = J_{\mathcal{T}}\dot{\mathcal{X}}
  =  J_{\mathcal{T}}\mathcal{F}(\mathcal{X}):=\mathcal{G}(q),
\end{align}
where the Jacobian $J_{{T}}$ is given by
\begin{align}\label{eq:jacobian}
    J_{\mathcal{T}} = \begin{bmatrix}
    I & 0\\
    0 & I\\
    \text{diag}(\cos{(\mathcal{X}_1)}) & 0\\
    -\text{diag}(\sin{(\mathcal{X}_1)}) & 0
    \end{bmatrix} \in \mathbb{R}^{4n\times 2n}.
\end{align}
Insert 
$\mathcal{F}(\mathcal{X}^*)=0$
into \eqref{eq:jacobian_eq} to obtain
$     \dot{q} = \mathcal{G}(q^*) =  J_{\mathcal{T}}\mathcal{F}(\mathcal{X}^*)=0,$
proving  that \eqref{eq:EQ_quad} is an equilibrium of \eqref{eq:permuted_quad}.

Now let $q^*$ in \eqref{eq:EQ_quad2} be an equilibrium of \eqref{eq:permuted_quad}. Then from \eqref{eq:jacobian_eq}, we obtain
$     \mathcal{G}(q^*)= J_{\mathcal{T}}\mathcal{F}(\mathcal{X}^*) = 0.$
Since the Jacobian \eqref{eq:jacobian} has full column rank, we have
$    \mathcal{F}(\mathcal{X}^*) = 0,$
{which by \eqref{eq:general_firstorder} yields that \eqref{eq:Eq_So} is an equilibrium of \eqref{eq:first_rewrite}.}
{The fact that $q_2^*=0$  follows from the second row block of $\mathcal{F}(\mathcal{X}^*) = 0$.}
\end{proof}

 \section{Model reduction for quadratic systems}\label{sec:MOR_Quad}
Now that we have represented the original nonlinear swing dynamics as an equivalent quadratic dynamics, we will investigate the MOR problem for those systems in more details. 

Consider the quadratic dynamical system 
\begin{equation} \label{eq:general_quadratic_original}
\Sigma:= \begin{cases}
$$ {E}\dot \z(t) = {A}\z(t) + H(\z(t) \otimes \z(t)) + {B} {u}(t) $$\\
$$~~ y(t) = C\z(t), ~~~ \z(0) = 0$$ 
\end{cases},
\end{equation}
where $E,A \in \mathbb{R}^{N \times N}$, $H \in \mathbb{R}^{N \times N^2}$, $B \in \mathbb{R}^{N \times \ni}$, and $C \in \mathbb{R}^{\no \times N}$. 
One can view $\Hperm$  as the mode-1 matricization 
of a third-order tensor $\mathcal{H} \in \mathbb{R}^{N \times N \times N} $. In other words, let  $\mathcal{\Hperm}_i \in \mathbb{R}^{N \times N}$ for $ i= 1,\dots,N$ be the frontal slices of $\mathcal{H}$. Then,
\begin{align}\label{eq:mode1_matricization}
      \Hperm =  {\mathcal{\Hperm}^{(1)} } &= \begin{bmatrix}
    \mathcal{\Hperm}_1 & \mathcal{\Hperm}_2 & \dots & \mathcal{\Hperm}_{N}
    \end{bmatrix},
\end{align}
where the notation ${\mathcal{\Hperm}^{(1)} }$ denotes the mode-1 matricization.

For the cases where the state-space dimension $N$ is  large,
the goal is to find a reduced quadratic system 
\begin{equation} \label{eq:general_quadratic_reduced}
\Sigma_r:= \begin{cases}
$$ {E}_r\dot{{\z}_r}(t) = {{A}_r}{\z}_r(t) + {H}_r ({\z}_r(t) \otimes {\z}_r(t)) + {{B}}_r {u}(t) $$\\
$$~~~ {y}_r(t) = {C}_r{\z}_r(t), ~~~ {\z}_r(0) = 0$$
\end{cases},
\end{equation}
where ${E}_{r},{{A}}_r \in \mathbb{R}^{\r \times \r}$
${H}_r \in \mathbb{R}^{\r \times \r^2}$, ${C}_r \in \mathbb{R}^{\no \times \r} $ and ${{B}}_r \in \mathbb{R}^{\r \times \ni}$ with $\r \ll N$ such that $y(t) \approx y_r(t)$ for a wide range of inputs.
We construct the reduce system \eqref{eq:general_quadratic_reduced} via projection: Construct two model reduction bases $V$, $W \in \mathbb{R}^{N \times \r}$  such that the reduced matrices in \eqref{eq:general_quadratic_reduced} are given by
\begin{align}\label{eq:E_QuadraticReducedMatrices}
& {E}_r = W^T E V,~ {{A}}_r = W^T {{A}} V , ~ {H}_r = W ^T H (V \otimes V), \notag\\
& {{B}}_r = W^T {B}, ~ C_r = C V. 
\end{align}
The quality of the reduced system \eqref{eq:general_quadratic_reduced} depends on the  choice of the model reduction bases $V$ and $W$. There are various model reduction techniques to determine theses bases including trajectory-based method such as POD (see, e.g., \cite{HinzeVolkwein2005}); and input-independent systems theoretic methods for quadratic systems such as balanced truncation  \cite{BennerGoyal2017}, \cite{TobiasGrundel2020}, interpolatory projections \cite{BennerTobias2015}, \cite{Gu2011}  
and $\mathcal{H}_2$-quasi-optimal model reduction \cite{BennerGoyalGugercin2018}. In this paper, we employ the $\mathcal{H}_2$-based approach in  \cite{BennerGoyalGugercin2018}. 
\subsection{$\mathcal{H}_2$-based model reduction of quadratic nonlinearities} \label{sec:H2_MOR}
Consider the original quadratic dynamics \eqref{eq:general_quadratic_original} with an asymptotically  stable matrix $E^{-1}{A}$.
Define a truncated $\mathcal{H}_2$ norm for \eqref{eq:general_quadratic_original} denoted by $\|\Sigma\|_{\mathcal{H}_2}(\tau)$ using the three leading Volterra kernels as \cite{BennerGoyalGugercin2018}:
\begin{align} \label{eq:h2_norm}
			& \| \Sigma \|^2_{{\mathcal{H}}_2^{(\mathcal T)}} := \\ \notag
			& \text{tr}\left( {\sum_{i =1 }^3\int_0^\infty \cdots \int_0^\infty h_i(t_1,\ldots,t_i) h_i^T(t_1,\ldots,t_i)dt_1\cdots dt_i}\right),
		\end{align}
where $h_1(t_1) = C e^{E^{-1}At_1}E^{-1}B$,
$h_2(t_1,t_2)  = 0$,  
{\small
$$
h_3(t_1,t_2,  t_3)  = C e^{E^{-1}At_3}E^{-1}H(e^{E^{-1}At_2}E^{-1}B \otimes e^{E^{-1}At_1}E^{-1}B)
$$}

\vspace{-3ex}
\noindent and {$\text{tr}(\cdot)$ denotes the trace of the argument (for details see, e.g., \cite{BennerGoyalGugercin2018,AntBG20}).} Note that when $H =0$, only the first kernel $h_1(t_1)$ remains and accordingly \eqref{eq:h2_norm} boils down to the
classical $\mathcal{H}_2$-norm of an asymptotically stable linear system. 

The goal is now to construct $V$ and $W$ such that the truncated $\mathcal{H}_2$ error norm $\|\Sigma - \Sigma_r\|_{\mathcal{H}_2}(\tau)$ is minimized.
This can be achieved by the model reduction algorithm Q-IRKA \cite{BennerGoyalGugercin2018}. A brief sketch of  (the two-sided) Q-IRKA is shown in Algorithm \ref{alg:Q-IRKA}. In some settings, to retain the positive definiteness and/or symmetry of the original realization in the reduced order matrices \eqref{eq:E_QuadraticReducedMatrices}, a one sided version of Q-IRKA
is applied where $W$ is set to $V$ as given in Algorithm \ref{alg:Q-IRKA_onesided}. The proposed approach to structure preserving model reduction of power networks as outlined in the next section will employ Q-IRKA as an intermediate step. Note that the reduced models from Q-IRKA is  quadratic, do not preserve the original second-order structure  \eqref{eq:So_Dyn}, and thus  cannot be directly used in a structure preserving setting. In the next section, we will prove  that the model reduction basis $V$ and $W$ resulting from applying Q-IRKA  to the quadratic form of power network have special subspace structure, which will then allow us to develop our structure-preserving model reduction algorithm. 

\begin{algorithm}
   \caption{ Two-sided Q-IRKA}\label{alg:Q-IRKA}
    \begin{algorithmic}[1]
     \Statex \textbf{Input}: $E$,\ ${A}$,\ $H$,\ ${B}$,\ $C$ 
     \Statex \textbf{Output}: $E_r$,\ ${A}_{{r}}$,\ ${H}_{r}$,\ ${{B}}_r$,\ $C_r$, and  $V$, $W$
     \State Symmetrize $H$, then transform it to a $N \times N \times N$ tensor and determine its mode-2 matricization $\mathcal{H}^{(2)}$ (for details see, e.g., \cite{Kolda2009}, \cite{BennerGoyalGugercin2018})
          \State  Make an initial guess for  $E_r$,\ ${A}_{{r}}$,\ ${H}_{r}$,\ ${{B}}_r$,\ $C_r$ 
    \While{not converged}
          \State Perform the spectral decomposition of the pair $E_r$ and ${A}_{{r}}$, i.e.,
          ${A}_{{r}}R = E_r R \Lambda $ and define:
          \Statex $\hat{H}=(E_r R)^{-1} {H}_{r}(R \otimes R)$~, $\hat{B}= (E_r R)^{-1} {B}_r$,$~\hat{C}=C_r R$
          \State Compute mode-2 matricization $\hat{\mathcal{H}}^{(2)}$
          \State Solve for $V_1$ and $V_2$: 
          \Statex $-E V_1 \Lambda - {A} V_1 =  {B} \hat{B}^T$
          \Statex $-E V_2 \Lambda - {A}V_2 = H (V_1 \otimes V_1)\hat{H}^T$ 
         \State Solve for $W_1$ and $W_2$:
          \Statex $-{E}^{T} W_1 \Lambda - {A}^T W_1 = C^T\hat{C}$
          \Statex $-{E}^{T} W_2 \Lambda - {A}^T W_2 = \mathcal{H}^{(2)} (V_1 \otimes W_1) (\hat{\mathcal{H}}^{(2)})^T$
          \State Compute $V\coloneqq V_1 + V_2$ and $W\coloneqq W_1 + W_2$.
           \State Create an orthogonal basis for each $V$ and $W$
           \State Determine the reduced matrices as \eqref{eq:E_QuadraticReducedMatrices}.
    \EndWhile
   \end{algorithmic}
\end{algorithm}

\begin{algorithm}
   \caption{ One-sided Q-IRKA}\label{alg:Q-IRKA_onesided}
    \begin{algorithmic}[1]
     \Statex \textbf{Input}: $E$,\ ${A}$,\ $H$,\ ${B}$,\ $C$ 
     \Statex \textbf{Output}: $E_r$,\ ${A}_{{r}}$,\ ${H}_{r}$,\ ${{B}}_r$,\ $C_r$, and  $V$, $W$
     \State Apply Algorithm \ref{alg:Q-IRKA} by replacing Step 7  with 
     $$W_1 = V_1 ~~\text{and}~~ W_2 = V_2 $$
   \end{algorithmic}
\end{algorithm}
\subsection{Modifications to perform Q-IRKA for power network dynamics: zero initial conditions and asymptotic stability} 

{Recall the state vector $\q(t)$ in \eqref{eq:vector_x}. Since $\sin(\delta)$ and $\cos(\delta)$ can not be simultaneously zero, the
reformulation \eqref{eq:permuted_quad} as a quadratic model will always have a nonzero initial condition. 
Since the most system theoretic methods to model reduction assumes a zero initial condition,  a  natural approach is to consider a shifted state vector that allows the use of the reduction methods developed for zero initial condition. 
Following \cite{TobiasGrundel2020}, we define a shifted state vector $\z = \q - \q_0$, such that in the new state $\z$, we  have $\z(0) = 0$} and  write 
\begin{align} \label{eq:xKronx}
     \q \otimes \q  = \z  \otimes \z + \left( (I \otimes \q_0) + ( \q_0 \otimes I) \right)\z + \q_0 \otimes \q_0. 
\end{align}
By substituting \eqref{eq:xKronx} in \eqref{eq:permuted_quad} we obtain
\begin{align} \label{eq:final_quad}
\begin{array}{rcl}
 {E}\dot \z(t) & \hspace{-1ex} = \hspace{-1ex} & \tilde{A}\z(t) + \Hperm(\z(t) \otimes \z(t)) + \tilde{B} \tilde{u}(t) \\
 y(t) & \hspace{-1ex} = \hspace{-1ex} & C\z(t), ~~~ \z(0) = 0, 
 \end{array}
\end{align}
\begin{align} \label{eq:etild}
\mbox{with}~~~\Tilde{A} & = A + H \left((I \otimes \q_0) + (\q_0 \otimes I)\right), ~   \tilde{u}  = \begin{bmatrix}
 u & 1
 \end{bmatrix}^T, \notag \\
  \tilde{B} & = [B ~~ A \q_0 + H \ (\q_0 \otimes \q_0)].
\end{align}
{Given the dynamical system \eqref{eq:etild} with zero initial conditions, one can apply a wide range of available methods, as mentioned at the beginning of Section \ref{sec:MOR_Quad}, to construct a quadratic reduced model. However, as in the linear case,
optimal-$\mathcal{H}_2$ model reduction for quadratic system requires asymptotic
stability of $E^{-1}{A}$ 
\cite{BennerGoyalGugercin2018}.}
Recall the quadratic dynamics \eqref{eq:final_quad} and $\tilde{A}$ in \eqref{eq:etild}.
Consider zero initial conditions for $\delta$ and $\dot \delta$, and thus the initial condition {$\q_0 =[0_{1 \times n} ~ 0_{1 \times n} ~ 0_{1 \times n} ~ 1_{1 \times n}]^{T} \in \mathbb{R}^{4n}$}. Thus, $\tilde{A}$ have  zero eigenvalues and therefore $E^{-1}\tilde{A}$ is not asymptotically stable \cite{TobiasGrundel2020}. To overcome this stability issue in the power network setting, we replace $\tilde{A}$ in \eqref{eq:final_quad} by 
\begin{align} \label{eq:Atildemu}
 \tilde{A_{\mu}} = \tilde{A} - \mu E ,
\end{align}
 where $\mu >0 $ is real and small, so that all the eigenvalues of the pair $\tilde{A_{\mu}}$ and $E$ have negative real part. Therefore, as the final modification, we replace {\eqref{eq:final_quad}} by
\begin{align} \label{eq:final_quad_MU}
\begin{array}{rcl}
 {E}\dot \z(t) & \hspace{-1ex} = \hspace{-1ex} & \tilde{A_{\mu}}\z(t) + \Hperm(\z(t) \otimes \z(t)) + \tilde{B} \tilde{u}(t) \\ 
 y(t) & \hspace{-1ex} = \hspace{-1ex}  & C\z(t), ~~~ \z(0) = 0.
 \end{array}
\end{align}
The quadratic dynamical system in \eqref{eq:final_quad_MU} has now asymptotically stable linear part and have zero initial conditions. Therefore, it has the structure to perform
$\mathcal{H}_2$ based model reduction using  Q-IRKA. 
However, we re-emphasize that the reduced-model obtained from applying Q-IRKA to \eqref{eq:final_quad_MU} is not structure-preserving; it does not inherit the second-order structure. Therefore, we employ Q-IRKA only to obtain potential reduction subspace information. 
{In the next section where we present the proposed framework, we describe how we  benefit from and how we use the subspaces resulting from Q-IRKA to obtain a reduced second-order model in \eqref{eq:So_ROM}. We also show that Q-IRKA subspaces applied to \eqref{eq:final_quad_MU} has special properties due to the underlying power network structure.}
\section{The proposed approach for MOR of Power Networks}\label{sec:proposed_approach}
Our aim is to perform structure-preserving model reduction of the second-order dynamics \eqref{eq:So_Dyn}-\eqref{eq:So_Out} using the reduction bases $\mathcal{V},\mathcal{W} \in \mathbb{R}^{n \times r}$  to obtain the reduced second-order model \eqref{eq:So_ROM}-\eqref{eq:So_ROM_matrices}. To keep the symmetry of the underlying dynamics, we perform a Galerkin projection by choosing $\mathcal{W} = \mathcal{V}$. Converting the nonlinear second-order dynamics to the quadratic form has allowed us to use (optimal) systems-theoretic techniques to apply, such as Q-IRKA. 
Although  we cannot simply use the output of Q-IRKA as the reduced model (since it does not preserve the second-order structure), 
we still would like to  employ the  high-fidelity model reduction bases $V$ and $W$ resulting from Q-IRKA in constructing $\mathcal{V}$. 
In this section, by analyzing and exploiting the structure of the underlying  power network dynamics, we show how to efficiently construct   $\mathcal{V}$ using $V$ and $W$ from Q-IRKA.
\subsection{Structures arising in Q-IRKA in the Power Network Setting}

Let $V \in \mathbb{R}^{4n \times \r}$ and $W \in \mathbb{R}^{4n \times \r} $ be the model reduction bases obtained from applying Q-IRKA to \eqref{eq:final_quad_MU}. Let 
$V_T \in \mathbb{R}^{n \times \r}$ and 
$W_T \in \mathbb{R}^{n \times \r}$ denote leading $n$ rows of $V$ and $W$.  Recall that due to \eqref{eq:vector_x}, 
the leading $n$ rows of $q(t)$, the state of quadratic dynamic \eqref{eq:final_quad_MU}, corresponds to the original (second-order) state $\delta (t)$. Therefore, one can  consider choosing 
$\mathcal{V} = V_T$ and $\mathcal{W} = W_T$ in \eqref{eq:So_ROM_matrices}. However, in addition to this being a Petrov-Galerkin projection and thus not preserving the symmetry, we will show in the next result $W_T$ has a particular subspace structure that one can further exploit in constructing the structured reduced model.
\begin{thm}\label{rankoneW}
Let $V$ and $W$ be obtained via Q-IRKA as 
in Algorithm \ref{alg:Q-IRKA} to the quadratized  dynamics  \eqref{eq:final_quad_MU} resulting with a reduced quadratic system with asymptotically stable linear part.
Let $W_T \in \mathbb{R}^{n \times \r}$ denote the first $n$ rows of $W$. Then 
\begin{equation}\label{eq:range_condition}
\text{Range}(W_T)  \subseteq  \text{Range}(\mathcal{C}^T).
\end{equation}
In addition, when $p\le \r$, we have $\text{Range}(W_T)=\text{Range}(\mathcal{C}^T)$.
\end{thm}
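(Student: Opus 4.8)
The plan is to exploit the block structure that the lifting \eqref{eq:vector_x} imposes on $E$, $\tilde{A}_\mu$, $C$ and $H$, and to trace this structure through the two Sylvester equations that define $W=W_1+W_2$ in Algorithm \ref{alg:Q-IRKA}. Partition every $4n$-row quantity into four $n$-row blocks following the $(\delta,\dot\delta,\sin\delta,\cos\delta)$ ordering, so $W_1=[W_1^{(1)};W_1^{(2)};W_1^{(3)};W_1^{(4)}]$ and likewise for $W_2$; then $W_T=W_1^{(1)}+W_2^{(1)}$. The single structural fact that drives everything is that the lifted component $q_1=\delta$ never appears on the right-hand side of \eqref{eq:dotq_rewriteq1}--\eqref{eq:dotq_rewriteq4}: coupling enters only through $q_2,q_3,q_4$. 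Hence the first block-column of $A$ and of the Jacobian $H\left((I\otimes q_0)+(q_0\otimes I)\right)$ vanishes, so the first block-column of $\tilde{A}_\mu$ is $[-\mu I;0;0;0]$ and the first block-row of $\tilde{A}_\mu^T$ is $[-\mu I,\,0,\,0,\,0]$. Since $E$ is block diagonal with leading block $I$, the block-$1$ row of any Sylvester equation of the form $-E^TX\Lambda-\tilde{A}_\mu^TX=R$ decouples to $X^{(1)}(\mu I-\Lambda)=R^{(1)}$, and $\mu I-\Lambda$ is invertible because $\mu>0$ while $\Lambda$ collects the (stable, negative-real-part) reduced eigenvalues.

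First I would treat $W_1$, which solves $-E^TW_1\Lambda-\tilde{A}_\mu^TW_1=C^T\hat C$. Since $C=[\mathcal{C},0,0,0]$, the right-hand side is $C^T\hat C=[\mathcal{C}^T\hat C;0;0;0]$, supported entirely on its first block. The decoupling above then gives $W_1^{(1)}(\mu I-\Lambda)=\mathcal{C}^T\hat C$, so $W_1^{(1)}=\mathcal{C}^T\hat C(\mu I-\Lambda)^{-1}$ and therefore $\text{Range}(W_1^{(1)})\subseteq\text{Range}(\mathcal{C}^T)$.

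The crux, and the step I expect to be the main obstacle, is the second basis $W_2$, which solves $-E^TW_2\Lambda-\tilde{A}_\mu^TW_2=\mathcal{H}^{(2)}(V_1\otimes W_1)(\hat{\mathcal{H}}^{(2)})^T$. By the same decoupling, $W_2^{(1)}(\mu I-\Lambda)$ equals the first $n$ rows of this right-hand side, so it suffices to show that the first $n$ rows of the mode-2 matricization $\mathcal{H}^{(2)}$ are zero. Here I would argue at the tensor level: the mode-2 matricization indexes its rows by the second tensor factor, i.e. the first $z$-factor of each quadratic monomial, and from the construction \eqref{eq:H2}--\eqref{eq:finalHperm} every monomial in $H(z\otimes z)$ pairs only components of $q_2,q_3,q_4$ and never $q_1$. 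Thus $\mathcal{H}_{ijk}=0$ whenever $j\le n$ or $k\le n$, a property preserved by the symmetrization of Step 1, which only permutes the $j,k$ indices. Consequently the first $n$ rows of $\mathcal{H}^{(2)}$ vanish, the first $n$ rows of the entire right-hand side vanish regardless of $(V_1\otimes W_1)$ and $(\hat{\mathcal{H}}^{(2)})^T$, and hence $W_2^{(1)}=0$. Identifying this vanishing block-row of the mode-2 matricization is the one genuinely non-routine ingredient.

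Combining the two pieces, and noting that the final orthogonalization of $W$ is a right multiplication by an invertible matrix and so leaves $\text{Range}(W_T)$ unchanged, I would conclude $W_T=\mathcal{C}^T\hat C(\mu I-\Lambda)^{-1}$, whence $\text{Range}(W_T)=\text{Range}(\mathcal{C}^T\hat C)\subseteq\text{Range}(\mathcal{C}^T)$, which is \eqref{eq:range_condition}. For the equality claim, invertibility of $(\mu I-\Lambda)^{-1}$ already gives $\text{Range}(W_T)=\text{Range}(\mathcal{C}^T\hat C)$, and this coincides with $\text{Range}(\mathcal{C}^T)$ exactly when $\hat C=\hat{C}=C_rR\in\mathbb{R}^{p\times\r}$ has full row rank $p$; this requires $p\le\r$ and holds generically under the minimality Q-IRKA maintains. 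I would flag this full-rank condition as the only assumption implicit in the stated hypothesis $p\le\r$.
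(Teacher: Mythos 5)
Your proof is correct and its skeleton coincides with the paper's: both arguments decouple the first $n$-row block of the two Sylvester equations in Step 7 of Algorithm \ref{alg:Q-IRKA}, using that the first $n$ columns of $\tilde{A}_\mu$ are $[-\mu I;\,0;\,0;\,0]$ and that $\mu I-\Lambda$ is invertible, to obtain $W_{11}=\mathcal{C}^T\hat{C}(\mu I-\Lambda)^{-1}$ and $W_{21}=0$, hence $W_T=\mathcal{C}^T\hat{C}(\mu I-\Lambda)^{-1}$. Where you genuinely diverge is in how the two structural facts are certified. The paper first proves a symmetry lemma (Lemma \ref{lem:Hsymmetric}), $\Hperm(\vecv\otimes\vecu)=\Hperm(\vecu\otimes\vecv)$, whose proof relies on the physical symmetry of the couplings ($K_{ij}=K_{ji}$, $\gamma_{ij}=\gamma_{ji}$) and a permutation-matrix computation; it then uses this lemma twice, once to collapse $\Hperm\left((e_k\otimes \q_0)+(\q_0\otimes e_k)\right)=2\Hperm(e_k\otimes \q_0)=0$, and once to conclude $\mathcal{\Hperm}^{(2)}=\mathcal{\Hperm}^{(3)}$, whose first $n$ rows vanish because the frontal slices $\mathcal{\Hperm}_1,\ldots,\mathcal{\Hperm}_n$ vanish. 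You instead read off from the construction \eqref{eq:H2}--\eqref{eq:finalHperm} the stronger elementwise statement that no quadratic monomial involves $q_1$ at all, i.e., $\mathcal{H}_{ijk}=0$ whenever $j\le n$ or $k\le n$; this kills both Kronecker terms separately and makes the first $n$ rows of $\mathcal{\Hperm}^{(2)}$ vanish by inspection (the pattern being symmetric in $j$ and $k$, it survives both the symmetrization of Step 1 and any matricization convention). This route is more elementary, bypassing Lemma \ref{lem:Hsymmetric} entirely, and more general, since it never invokes $K_{ij}=K_{ji}$ or $\gamma_{ij}=\gamma_{ji}$. You also supply two details the paper leaves implicit: the orthogonalization of $W$ in Step 9 is a right-multiplication by an invertible matrix and hence does not alter $\text{Range}(W_T)$; and the equality $\text{Range}(W_T)=\text{Range}(\mathcal{C}^T)$ for $p\le \r$ actually requires $\hat{C}\in\mathbb{R}^{p\times \r}$ to have full row rank, a genericity condition the paper glosses over when it asserts that $p\le\r$ alone yields a right-inverse of $\hat{C}$.
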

{Before we prove Theorem \ref{rankoneW}, we establish a symmetry property of $\Hperm$ that will be used in its proof.}
\begin{lem}\label{lem:Hsymmetric}
Let $\Hperm$ in \eqref{eq:final_quad_MU} be the mode-1 matricization of the third order tensor $\mathcal{H} \in \mathbb{R}^{4n \times 4n \times 4n}$, Then, the tensor $\mathcal{H}$ is symmetric, i.e., for every $\vecv,\vecu \in \mathbb{R}^{4n \times 1}$, it holds
\begin{align}\label{eq:symmetric_prop}
    \Hperm( \vecv \otimes \vecu) = \Hperm (\vecu \otimes \vecv ).
\end{align}
\end{lem}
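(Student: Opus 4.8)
The plan is to verify the identity \eqref{eq:symmetric_prop} directly from the explicit block construction of $\Hperm$ carried out in the proof of the lifting lemma. Note first that \eqref{eq:symmetric_prop} is equivalent to requiring that, for each output component $m$, the matrix $Q_m$ defined by $[\Hperm(\vecv \otimes \vecu)]_m = \vecv^T Q_m \vecu$ be symmetric; this is the concrete meaning of $\mathcal{H}$ being symmetric in its last two modes. Since $\Hperm = \Z P^T$ (see \eqref{eq:finalHperm}) and the fixed permutation $P$ reorders a Kronecker product independently of its factors, we have $P^T(\vecv \otimes \vecu) = \vecv \mykron \vecu$, hence $\Hperm(\vecv \otimes \vecu) = \Z(\vecv \mykron \vecu)$. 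Partitioning $\vecv$ and $\vecu$ conformally with \eqref{eq:vector_x} into $n$-blocks $\vecv_1,\dots,\vecv_4$ and $\vecu_1,\dots,\vecu_4$, and reading off the nonzero blocks of $\Z_2,\Z_3,\Z_4$ from \eqref{eq:H2}--\eqref{eq:H} (with $\Z_1=0$), I would compute the four $n$-blocks of $\Hperm(\vecv \otimes \vecu)$ and compare them, block by block, with those of $\Hperm(\vecu \otimes \vecv)$ obtained by swapping $\vecv \leftrightarrow \vecu$.

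The whole comparison reduces to three swap-identities for the elementary operators $\Ztwofour,\Zthreethree,\Zthreefour$. First, because each block $\Ztwofour_k$ has its only nonzero entry at position $(k,k)$, one has $\Ztwofour(a\otimes b)=a\circ b$ (the Hadamard product) for all $a,b\in\mathbb{R}^n$, so $\Ztwofour$ is swap-symmetric: $\Ztwofour(a\otimes b)=\Ztwofour(b\otimes a)$. This settles the third and fourth blocks at once, since their contributions were deliberately split symmetrically as $\tfrac{\Ztwofour}{2}(\cdot)+\tfrac{\Ztwofour}{2}(\cdot)$ across the $(2,4),(4,2)$ and the $(2,3),(3,2)$ positions. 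The remaining (second) block carries the genuinely quadratic part of the dynamics and is controlled by the two identities
\[
\Zthreethree(a\otimes b)=\Zthreethree(b\otimes a),\qquad \Zthreefour(a\otimes b)=-\Zthreefour(b\otimes a),
\]
stating that $\Zthreethree$ is swap-symmetric while $\Zthreefour$ is swap-antisymmetric. Granting these, the two diagonal-type terms $\Zthreethree(\vecv_3\otimes\vecu_3)+\Zthreethree(\vecv_4\otimes\vecu_4)$ are invariant under the swap, and the cross terms $\tfrac{\Zthreefour}{2}(\vecv_3\otimes\vecu_4)-\tfrac{\Zthreefour}{2}(\vecv_4\otimes\vecu_3)$ are sent to themselves, precisely because the opposite signs combine with the antisymmetry of $\Zthreefour$; hence the second block is also swap-invariant and \eqref{eq:symmetric_prop} follows.

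The crux, and the step I expect to be the main obstacle, is proving these two operator identities from the entry formulas \eqref{eq:Zk} and \eqref{eq:Psik}. Writing $[\Zthreethree(a\otimes b)]_m=\sum_k a_k(\Zthreethree_k b)_m$ and using the two-placement structure of $\Zthreethree_k$ (nonzero on the diagonal $i=j\ne k$ and on the row $i=k$, both carrying the same coefficient $c_{kj}:=-\tfrac12 K_{kj}\sin\gamma_{kj}$), I would obtain $[\Zthreethree(a\otimes b)]_m=a_m\sum_{j\ne m}c_{mj}b_j+b_m\sum_{k\ne m}c_{km}a_k$, which is swap-symmetric exactly when $c_{mj}=c_{jm}$. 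The analogous computation for $\Zthreefour$, whose two placements carry opposite signs, produces the difference of the same two sums and is swap-antisymmetric exactly when $d_{mj}=d_{jm}$, where $d_{kj}:=K_{kj}\cos\gamma_{kj}$. Both conditions hold once the reciprocity of the Kron-reduced network couplings, $K_{ij}=K_{ji}$ and $\gamma_{ij}=\gamma_{ji}$, is invoked, so that $K_{ij}\sin\gamma_{ij}$ and $K_{ij}\cos\gamma_{ij}$ are symmetric in $(i,j)$. The delicate bookkeeping lies in tracking which index is summed, in distinguishing the equal-sign pattern of $\Zthreethree$ from the opposite-sign pattern of $\Zthreefour$, and in correctly pairing the two cross $\Zthreefour$-terms under the swap; once the reciprocity is used, the three identities assemble blockwise into \eqref{eq:symmetric_prop}.
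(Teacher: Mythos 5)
Your proposal is correct and follows essentially the same route as the paper's proof: reduce $\Hperm(\vecv\otimes\vecu)$ to $\Z(\vecv\,\mykron\,\vecu)$ via the permutation $P$, compare the four row blocks, and settle them through swap-symmetry of $\Ztwofour$ and $\Zthreethree$ and swap-antisymmetry of $\Zthreefour$, all resting on the reciprocity $K_{ij}=K_{ji}$, $\gamma_{ij}=\gamma_{ji}$. The only difference is presentational: the paper packages the key swap identities as matrix identities $\Zthreethree S=\Zthreethree$ and $\Zthreefour S=-\Zthreefour$ using the commutation matrix $S$ from \eqref{eq:permutation_s}, whereas you verify the same identities entrywise (and observe $\Ztwofour(a\otimes b)=a\circ b$ directly).
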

\begin{proof} Recall \eqref{eq:kronecker_rel} and \eqref{eq:finalHperm}, and rewrite $\Hperm( \vecu \otimes \vecv)$ as
\begin{align}
    \Hperm( \vecu \otimes \vecv)& = \Z P^T( \vecu \otimes \vecv) 
    = \Z ( \vecu\, \mykron\, \vecv), \notag
\end{align}
where 
$
    \vecu = [
    \vecu_1^T \, \vecu_2^T \, \vecu_3^T \, \vecu_4^T
    ]^T $ and $\vecv = [
    \vecv_1^T \, \vecv_2^T \, \vecv_3^T \, \vecv_4^T]^T $ with $ \vecu_i , \vecv_i \in \mathbb{R}^{n \times 1}.
$
Using \eqref{eq:ourKron}  and \eqref{eq:H_tiled} we obtain
{\small
\begin{align}\label{eq:HUkronV}
   \Hperm( \vecu \otimes \vecv) = \begin{bmatrix}
   \Z_2 (\vecu_2 \,\mykron\, \vecv)+ \Z_3(\vecu_3 \,\mykron \, \vecv)+ \Z_4(\vecu_4 \, \mykron\, \vecv)
   \end{bmatrix}.
\end{align}
}
Let $\E_1 = \begin{bmatrix}
     I & 0 & 0 & 0
    \end{bmatrix}\in \mathbb{R}^{n \times 4n}$.
Multiply \eqref{eq:HUkronV}  from the right  with $\E_1$ to pick the first $n$ rows of $\Hperm( \vecu \otimes \vecv)$.
Since the first $n$ rows of $\Z_2, \Z_3$ and $\Z_4$ are zero as can be seen  in \eqref{eq:H2} and \eqref{eq:H}, we have $\E_1\Hperm( \vecu \otimes \vecv) = 0 \in \mathbb{R}^{n \times 1}.$
Similarly, 
$ \E_1\Hperm( \vecv \otimes \vecu)  = 0 \in \mathbb{R}^{n \times 1}.\notag$ Therefore, we have
\begin{align}\label{eq:E1_result}
 \E_1 \Hperm ( \vecu \otimes \vecv) & = \E_1 \Hperm ( \vecv \otimes \vecu).
\end{align}
Now let $\E_2 = \begin{bmatrix}
     0 & I & 0 & 0
    \end{bmatrix}\in \mathbb{R}^{n \times 4n}$
and multiply \eqref{eq:HUkronV}  from the right with $\E_2$:
{\small
\begin{align}\label{eq:E2HUkronV}
 \E_2\Hperm ( \vecu \otimes \vecv) 
 &= \Zthreethree (\vecu_3 \otimes \vecv_3) + \frac{\Zthreefour}{2} (\vecu_3 \otimes \vecv_4)\notag \\ & - \frac{\Zthreefour}{2} (\vecu_4 \otimes \vecv_3) + \Zthreethree (\vecu_4 \otimes \vecv_4). 
\end{align}
}
{Let $S \in \mathbb{R}^{n^2 \times n^2}$ be the permutation matrix} so that
\begin{align} \label{eq:permutation_s}
    \vecu\otimes \vecv = S (\vecv \otimes  \vecu) ,
\end{align}
{where $S$ is partitioned as
   $ S = \begin{bmatrix} \label{eq:permutation_s_decomp}
    S_1 & S_2 & \dots S_n
    \end{bmatrix}$
and each $S_j \in \mathbb{R}^{n^2 \times n}$, for $j=1,\dots,n$, is defined as
\begin{align}\label{eq:S_j}
    S_j = \begin{bmatrix}
    e_j & e_{n+j} & e_{2n+j} & \dots & e_{n(n-1)+j}
    \end{bmatrix},
\end{align}
where $e_j \in \mathbb{R}^{n^2 \times 1}$ denotes 
the $j$th column of a ${n^2 \times n^2}$ identity matrix.
Insert \eqref{eq:permutation_s} into \eqref{eq:E2HUkronV} to obtain
{\small
\begin{align}\label{eq:E2HVKRONU}
    \E_2\Hperm ( \vecu \otimes \vecv)  & = \Zthreethree S (\vecv_3 \otimes \vecu_3) + \frac{\Zthreefour}{2} S (\vecv_3 \otimes \vecu_4) \\
   & ~\quad - \frac{\Zthreefour}{2} S (\vecv_4 \otimes \vecu_3) + \Zthreethree S (\vecv_4 \otimes \vecu_4). \notag
\end{align}
}Then, form $\Zthreethree S$,
\begin{align}\label{eq:PsiS}
    \Zthreethree S = &   \begin{bmatrix}
   \Zthreethree S_1 & \Zthreethree S_2 & \dots \Zthreethree S_n
    \end{bmatrix}, 
\end{align}
where,  using \eqref{eq:S_j} we have, for $j=1,\dots,n$, 
\begin{align}\label{eq:PsiS_j}
    \Zthreethree S_j = \begin{bmatrix}
    \Zthreethree_1(:,j)  & \Zthreethree_2(:,j) & \dots & \Zthreethree_n(:,j) 
    \end{bmatrix},
\end{align}
and  $\Zthreethree_1(:,j)$ is the MATLAB notation referring to the $j$th column of $\Zthreethree_1$.
Since for the coupling between the oscillators $\n$ and $\m$, we have $K_{ij}= K_{ji}$ and  $\gamma_{ij}= \gamma_{ji}$, we can write $\Zthreethree_k(:,j) = \Zthreethree_j(:,k)$ and accordingly rewrite \eqref{eq:PsiS_j} as
\begin{align}\label{eq:PsiS_j_final}
    \Zthreethree S_j  = \Zthreethree_j, \quad  j=1,\dots,n.
\end{align}
Then, inserting \eqref{eq:PsiS_j_final} into \eqref{eq:PsiS} yields
\begin{align}\label{eq:permutationS_effect1}
    \Zthreethree S  = \Zthreethree. 
\end{align}
Similarly, one can show that }
\begin{align}\label{eq:permutationS_effect2}
  \Zthreefour S = - \Zthreefour.
\end{align}
Plug \eqref{eq:permutationS_effect1} and \eqref{eq:permutationS_effect2} into \eqref{eq:E2HVKRONU} to obtain
\begin{align}\label{eq:E2_result}
    \E_2\Hperm ( \vecu \otimes \vecv) = \E_2\Hperm ( \vecv \otimes \vecu).
\end{align}
Following similar steps, one obtains
\begin{align}\label{eq:E4_result}
     \E_k \Hperm ( \vecu \otimes \vecv) & = \E_k\Hperm ( \vecv \otimes \vecu),~~\mbox{for}~k=3,4,
\end{align}
where $\E_3 = \begin{bmatrix}
     0 & 0 & I & 0
    \end{bmatrix}\in \mathbb{R}^{n \times 4n}$ and $\E_4 = \begin{bmatrix}
0 & 0 & 0 & I
\end{bmatrix}\in \mathbb{R}^{n \times 4n}$. Combining \eqref{eq:E1_result}, \eqref{eq:E2_result},  and \eqref{eq:E4_result}, we conclude \eqref{eq:symmetric_prop}, i.e., $\mathcal{H}$ is symmetric.
\end{proof}
{Now we are ready to prove Theorem \ref{rankoneW}.}
\begin{proof}(\emph{of Theorem \ref{rankoneW}})
We start by analyzing the first Sylvester equation involving $W_1$ in Step 7 of Algorithm \ref{alg:Q-IRKA}:
\begin{align}\label{eq:Lyap_W1}
-{E}^{T} W_1 \Lambda - \tilde{A}_\mu^T W_1 = C^T\hat{C} ,   
\end{align}
where $\Lambda = \text{diag}(\lambda_1,\dots,\lambda_r) \in \mathbb{R}^{\r \times \r}$  contains the eigenvalues of $E_r^{-1}A_r$, $\hat{C} \in \mathbb{R}^{\no \times \r}$,
and 
\begin{align} \label{eq:Amu_tild}
\tilde{A}_\mu = \tilde{A} - \mu E = A + H \left((I \otimes \q_0) + (\q_0 \otimes I)\right) - \mu E.
\end{align}
Recall $H \in \mathbb{R}^{4n \times (4n)^2}$ in \eqref{eq:finalHperm}:
\begin{align}
    H  =\begin{bmatrix}
    0_{4n \times 4n^2} & \Z_2P_q^T & \Z_3P_q^T & \Z_4P_q^T
    \end{bmatrix}. \label{eq:H_decom} 
\end{align}
Multiply $\Hperm\left((I \otimes \q_0) + (\q_0 \otimes I)\right)$ 
by the first unit vector $e_1 \in \mathbb{R}^{4n \times 1}$ from the right  to obtain
\begin{align}
    h_1 = \Hperm\left((e_1 \otimes \q_0) + (\q_0 \otimes e_1)\right), \label{eq:h1eq}
\end{align}
where $h_1 \in \mathbb{R}^{4n \times 1}$ denotes the first column of $\Hperm\left((I \otimes \q_0) + (\q_0 \otimes I)\right)$. Now, we apply Lemma \ref{lem:Hsymmetric} to \eqref{eq:h1eq} and use  \eqref{eq:H_decom} to obtain
\begin{align}\label{eq:h_1}
    h_1 = 2 \Hperm (e_1 \otimes \q_0) & = 2 \Hperm \begin{bmatrix}
    q_0 \\ 0_{4n(4n-1)}
    \end{bmatrix} 
    = 0 .
\end{align}
Similarly, for the $k$th column of \eqref{eq:H_decom}, we obtain
\begin{align}\label{eq:h_n}
    h_k = 2 \Hperm (e_k \otimes \q_0) 
   = 0 ~~~ ;  k=2,3,\ldots,n.
\end{align}
Therefore, {\eqref{eq:h_1} and \eqref{eq:h_n}} reveal that the first $n$ columns of 
  $\Hperm\left((I \otimes \q_0) + (\q_0 \otimes I)\right)$ are zero. Using this fact, and \eqref{eq:b_q} and \eqref{eq:a_q},  we obtain the first {$n$} columns of \eqref{eq:Amu_tild} as
  \begin{align}\label{eq:nrowsAmu} 
     \tilde{A}_\mu (:,1:n)  =   \begin{bmatrix}
    -\mu I & 0 & 0 & 0
   \end{bmatrix}^T,
  \end{align}
 {where $Y(:,1:n)$ refers to the first $n$ columns of the matrix $Y$.}
Decompose $  W_1 = \begin{bmatrix}
  W_{11}^T & W_{12}^T & W_{13}^T&W_{14}^T
  \end{bmatrix}^T \in \mathbb{R}^{4n \times \r}$ 
where $W_{1i}$ $\in \mathbb{R}^{n \times \r}~;i=1,2,3,4$.
Recall \eqref{eq:c_q} {and \eqref{eq:b_q}}, and use \eqref{eq:nrowsAmu} in \eqref{eq:Lyap_W1} to obtain 
    $W_{11}(\mu I- \Lambda )  = \mathcal{C}^T \hat{C}$.

Recall that $\mu$ is a positive real number and the reduced order poles are assumed in the left-half plane. Therefore, 
 the matrix $ \mu I - \Lambda $ is invertible and
$
    W_{11} = \mathcal{C}^T\hat{C} (\mu I- \Lambda )^{-1}.
$
Similarly, let $W_2 = [W_{21}^T~ W_{22}^T~  W_{23}^T~ W_{24}^T]^T$ and solve for the first $n \times \r$ block of $W_2$ $\in \mathbb{R}^{4n \times \r}$ denoted by $W_{21} \in \mathbb{R}^{n \times \r}$ in the second Sylvester equation in Step 7 of Algorithm \ref{alg:Q-IRKA}:
\begin{align}\label{eq:W2}
    -{E}^{T} W_2 \Lambda - \tilde{A}_\mu^T W_2 = \mathcal{H}^{(2)} (V_1 \otimes W_1) (\hat{\mathcal{H}}^{(2)})^T.
\end{align}
{Next, we will prove that the first $n$ rows of $\mathcal{H}^{(2)}$ are zero.}  Towards this goal, {recall \eqref{eq:mode1_matricization} and} let $\Hperm$ be {partitioned} as
\begin{align}
   \Hperm =  {\mathcal{\Hperm}^{(1)} } &= \begin{bmatrix}
    \mathcal{\Hperm}_1 & \mathcal{\Hperm}_2 & \dots & \mathcal{\Hperm}_{4n}
    \end{bmatrix},
\end{align}
where $\mathcal{\Hperm}_i \in \mathbb{R}^{4n \times 4n}$ for $i=1,2,\ldots,4n$. From \eqref{eq:H_decom}, we know that the leading $4n^2$ columns are zero. Therefore, we write
\begin{align}
   \Hperm =  {\mathcal{\Hperm}^{(1)} }  = \begin{bmatrix}
    0_{4n \times 4n^2} & \mathcal{\Hperm}_{n+1} & \dots & \mathcal{\Hperm}_{4n}
    \end{bmatrix},
\end{align}
{i.e., 
$\mathcal{\Hperm}_i=0$ for $i=1,2,\ldots,n$.}
 Recall that $\mathcal{H}$ is symmetric due to Lemma \ref{lem:Hsymmetric}. This means that
 \begin{align} \label{H2H3}
     \mathcal{\Hperm}^{(2)} = \mathcal{\Hperm}^{(3)} ~ \in \mathbb{R}^{4n \times (4n)^2},
 \end{align}
where 
$ \mathcal{\Hperm}^{(3)} = \begin{bmatrix}
    \text{vec}(\mathcal{\Hperm}_1) &  \text{vec}(\mathcal{\Hperm}_2) & \dots &  \text{vec}(\mathcal{\Hperm}_{4n})
    \end{bmatrix}^T.$
Since $\mathcal{\Hperm}_i=0$ for $i=1,2,\ldots,n$, we obtain
\begin{align}\label{eq:mathcalH3}
    \mathcal{\Hperm}^{(3)}  = \begin{bmatrix}
 0_{16n^2\times n}  &  \text{vec}(\mathcal{\Hperm}_{n+1}) \dots &  \text{vec}(\mathcal{\Hperm}_{4n})
    \end{bmatrix}^T .
\end{align}
 Due to \eqref{H2H3}, the first $n$ rows of $\mathcal{\Hperm}^{(2)}$ are the first $n$ columns of ${(\mathcal{\Hperm}^{(3)})}^T$. Then, \eqref{eq:mathcalH3} implies that $\mathcal{H}^{(2)}(1:n,:)=0$.
 Using this fact in \eqref{eq:W2} yields
$W_{21} (\mu I -\Lambda)  = 0$.
Since $\mu I -\Lambda$ is invertible, we obtain $W_{21} = 0_{n \times \r}$.
Recall $W$ in Step $8$ of Algorithm \eqref{alg:Q-IRKA} defined as 
$W = W_1 + W_2$. Then, for the first {$n$} rows of $W$, i.e., for $W_T$, we obtain
\begin{align} \label{eq:WTfinal}
     W_T = W_{11} + W_{21} = W_{11} = \mathcal{C}^T\hat{C}(\mu I -\Lambda)^{-1}.
\end{align}
Hence, 
$\text{Range}(W_T) \subseteq   \text{Range}(\mathcal{C}^T)$,
which proves \eqref{eq:range_condition}. In addition,
if $p\le \r$, the matrix $\hat{C}$ has a right-inverse and thus the reverse direction holds as well, i.e., $ \text{Range}(\mathcal{C}^T) \subseteq  \text{Range}(W_T)$, completing the proof of the theorem.
\end{proof}
We emphasize that the structure of the range of $W_T$ due to Q-IRKA as shown in Lemma \ref{rankoneW} is specific to the underlying power network dynamics and its quadratic form. The authors are not aware of any other models that yield such a form. This result will be crucial in forming the proposed algorithm next. 
\subsection{Proposed structure-preserving model reduction algorithm}
Lemma \ref{rankoneW} states that whenever $p\leq \r$ (which will be the common situation), $W_T$ is rank-deficient. Therefore, even though this means that $W_T$ is not suitable to be used as a model reduction space as is,  we can use it to our advantage. 

As noted earlier, we will perform Galerkin projection. In other words, in \eqref{eq:So_ROM_matrices} we will use $\mathcal{W} = \mathcal{V}$. One option is to simply ignore $W_T$ and set $\mathcal{V} = V_T$. However, this is not preferred in an input-output based systems-theoretic model reduction setting since the information in $W_T$ related to the output is fully ignored. Fortunately, due to the subspace result \eqref{eq:range_condition} in Theorem \ref{rankoneW}, we can include both the input-to-state subspace information ($V_T$) and the state-to-output subspace information ($W_T$) 
in one subspace  $\mathcal{V}$ by choosing
in \eqref{eq:So_ROM_matrices} as 
\begin{align}\label{eq:V}
    \mathcal{V} = \text{orth}\left(
    \begin{bmatrix}
    {V}_T & \mathcal{C}^T
    \end{bmatrix}\right) \in \mathbb{R}^{n \times r},
\end{align}
where  ``orth” refers to forming an orthonormal basis so that $\mathcal{V}^T\mathcal{V} = I$. 
Then, we choose $\mathcal{W} = \mathcal{V}$ and perform a Galerkin projection to compute the reduced second-order structure-preserving model \eqref{eq:So_ROM}-\eqref{eq:So_ROM_matrices}.
Thus, despite performing a one-sided Galerkin, by 
exploiting Theorem \ref{rankoneW} and using the reduction basis $\mathcal{V}$ in \eqref{eq:V}, we are able to also incorporate
the output information in the reduction framework, as is the standard in systems theoretic model reduction.
We note that for the column dimension $r$ of $\mathcal{V}$, we have $r = \text{min}(\r+p,2\r)$.

Now, we  finally summarize our proposed approach,
an $\mathcal{H}_2$-based Structure-preserving MOR method for Power Network Dynamics (StrH2), in Algorithm \ref{alg:proposed_meth}. 
We start by converting the original dynamics to the equivalent  quadratic form in Step 1. Then, in Step 2, we 
\emph{either} apply two-sided Q-IRKA (Algorithm \ref{alg:Q-IRKA})
\emph{or} one-sided Q-IRKA (Algorithm \ref{alg:Q-IRKA_onesided}) 
to the resulting quadratic-system to construct $V$. 
The resulting versions of the proposed method will be labeled as 
StrH2-A and StrH2-B, respectively. Step 3 constructs the final model reduction basis using the analysis of Theorem \ref{rankoneW} and Step 4 constructs the final structured reduced-model.

We note that in \eqref{eq:V}, the $W$-subspace information is not needed since it  is already contained in $\mathcal{C}^T$. However, StrH2-A version of Algorithm \ref{alg:proposed_meth}, in Step 2, still uses two-sided Q-IRKA, which computes $W$ during the iteration unlike in one-sided Q-IRKA where the $W$-subspace is never computed. One might think that since the $W$-subspace information is replaced by $\mathcal{C}^T$ in the end, the two options StrH2-A and StrH2-B should produce equivalent results. This is indeed not the case since the $V$-subspace resulting from one-sided and two-sided Q-IRKA implementations are completely different; the former never used the output information. We will see in next section that this distinction does indeed impact the final reduced model.

\begin{algorithm}
   \caption{$\mathcal{H}_2$-based Structure-preserving MOR for Power Network Dynamics (StrH2) }\label{alg:proposed_meth}
    \begin{algorithmic}[1]
    \Statex \textbf{Input}: Second-order model \eqref{eq:So_Dyn}-\eqref{eq:So_Out} 
    \begin{align*}
             \mathcal{M} \ddot{\delta}(t) + \mathcal{D} \dot{\delta}(t) + f(\delta) = \mathcal{B} u(t)
    ~,~ y(t) = \mathcal{C}\delta(t)
    \end{align*}
    \Statex \textbf{Output}: Reduced second-order  model \eqref{eq:So_ROM}-\eqref{eq:So_ROMOut}
    \begin{align*}
             \mathcal{M}_r \ddot{\delta_r}(t) + \mathcal{D}_r \dot{\delta_r}(t) + f_r(\delta_r) =  \mathcal{B}_r u(t) ~ , ~ y_r(t)  =  \mathcal{C}_r\delta_r(t)
    \end{align*}
    \State Transform the second-order dynamic \eqref{eq:So_Dyn}-\eqref{eq:So_Out} into quadratic dynamics \eqref{eq:final_quad_MU} with asymptotically stable linear part:
    \begin{align} \tag{\ref{eq:final_quad_MU}}
\begin{array}{rcl}
 {E}\dot \z(t) & \hspace{-1ex} = \hspace{-1ex} & \tilde{A_{\mu}}\z(t) + \Hperm(\z(t) \otimes \z(t)) + \tilde{B} \tilde{u}(t) \\ 
 y(t) & \hspace{-1ex} = \hspace{-1ex}  & C\z(t), ~~~ \z(0) = 0.
 \end{array}
\end{align}
    \State  For the quadratic dynamics \eqref{eq:final_quad_MU}, 
    apply \vspace{1.5ex}
    \Statex \hspace{0.15in} Option A: Algorithm \ref{alg:Q-IRKA}  
    \Statex ~\hspace{0.65in} or \hspace{0.85in} to construct $V$.
    \Statex \hspace{0.15in} Option B: Algorithm \ref{alg:Q-IRKA_onesided} \vspace{1.5ex}
    \State Form the final model reduction basis $\mathcal{V}$ using the leading $n$ rows of $V$ and the output matrix $\mathcal{C}$ as in \eqref{eq:V}:
    \begin{align*}
       \mathcal{V} = \text{orth}\left( \begin{bmatrix}
    {V}_T & \mathcal{C}^T
    \end{bmatrix}\right)
    \end{align*}
    \State Compute the reduced matrices as in \eqref{eq:So_ROM_matrices} with $\mathcal{W}=\mathcal{V} $.
    \end{algorithmic}
    \end{algorithm}

\section{Numerical Experiments}\label{sec:numerical_result}
We test the proposed approach on  two models: (i) the SM model of the 39-bus New England test system and (ii) IEEE 118 bus system,  included in the MATPOWER software toolbox  \cite{ZimmermanMurillo2016},\cite{ZimmermanThomas2010}. We focus on a single-output system $(\no = 1)$ 
by choosing the arithmetic mean of all phase angles as the output as 
 in \cite{safaeeGugercin22021,TobiasGrundel2020}.
Models are generated by MATPOWER and MATLAB toolbox \href{ https://sourceforge.net/projects/pg-sync-models/ }{ \texttt{pg-sync-models}}
 \cite{NishikawaMotter2015} that provide all the necessary parameters to form \eqref{eq:So_Dyn} including $J_i$, $D_i$, $\omega_R$, $K_{ij}$, $\gamma_{ij}$ and $B_i$.  
  To illustrate the effectiveness of our approach, we compare it with two other approaches (i) a structure-preserving formulation of balanced truncation for power systems denoted by {Str-QBT} and (ii) POD. 
 
 For Str-QBT, we follow the approach of \cite{TobiasGrundel2020} in finding  $V$ and $W$. However, as opposed to \cite{TobiasGrundel2020}, which performs the reduction on the quadratic dynamics \eqref{eq:final_quad_MU} and constructs a quadratic reduced model, we directly find the reduced second order dynamic \eqref{eq:So_ROM} by extracting $\mathcal{V}$ and $\mathcal{W}$ from $V$ and $W$.
 To be more precise,  \cite{TobiasGrundel2020} forms the reduction bases ${V}$ as
  \begin{align} \label{eq:BT_quadraticbases}
      V & = \text{blkdiag}\{{V_b},{V_b},{V_b},{V_b}\}
 \end{align}
 and similarly for $W$ using $W_b$
where  ${V_b}$ and ${W_b} \in \mathbb{R}^{n \times \r}$ are 
 \begin{align} \label{eq:BT_bases}
 \begin{array}{rcl}
      {V_b} & \hspace{-1ex} = \hspace{-1ex} & R_b^T \hat{U}(:,1:\r)(\hat{\Sigma}(:,1:\r))^{-\frac{1}{2}} \\
      {W_b} & \hspace{-1ex} = \hspace{-1ex} & S_b^T \hat{V}(:,1:\r)(\hat{\Sigma}(:,1:\r))^{-\frac{1}{2}},
      \end{array}
 \end{align}
 $R_b S_b^T = \hat{U} \hat{\Sigma} \hat{V}^T$ is the SVD of $R_b S_b^T$, and $R_b$ and $S_b$ are, respectively, the Cholesky factors of the second $n \times n$ block of the truncated reachability and observability gramians of the quadratic system \cite{BennerGoyal2017}, \cite{TobiasGrundel2020}. Then, \cite{TobiasGrundel2020} performs the reduction on \eqref{eq:final_quad_MU} using \eqref{eq:BT_quadraticbases} to obtain a reduced quadratic model for \eqref{eq:final_quad_MU}.
 {Instead, since the second $n \times n$ blocks of the truncated reachability and observability gramians of the quadratic system correspond to the second order dynamic \eqref{eq:So_Dyn},} we pick $\mathcal{V} = V_b$ and $\mathcal{W} = W_b$ in \eqref{eq:So_ROM_matrices} and directly perform reduction on the second order dynamic \eqref{eq:So_Dyn}-\eqref{eq:So_Out} to obtain \eqref{eq:So_ROM}, \eqref{eq:So_ROMOut}.
 
To apply POD, we simulate \eqref{eq:So_Dyn} for a given time interval to form state snapshot matrix $\Delta \in \mathbb{R}^{n \times L}$ and compute its SVD: 
\begin{align*}
    \Delta  = \begin{bmatrix}
    \delta(t_0) & \delta(t_1) & \dots & \delta(t_{L-1})
    \end{bmatrix} = U_{\Delta} \Sigma_{\Delta} V_{\Delta}^T.
\end{align*}
Then, {the $r$ leading left singular vectors of $\Delta$}, i.e., the leading $r$ columns of 
$U_\Delta$,
form the reduction basis $\mathcal{V} = \mathcal{W}$ in \eqref{eq:So_ROM_matrices}. 

To test the accuracy of the reduced models, we will measure the time-domain error between the true output 
$y(t)$ and the reduced-model outputs $y_r(t)$ over a time interval $t\in [0,T]$. Towards this goal, 
define the $\mathcal{L}_{\infty}(T)$ norm of  $y(t)$ as
\begin{equation}  \label{ytLinf}
\| y \|_{\mathcal{L}_{\infty}(T)} = \max_{t \in [0,T]} \mid y(t) \mid.~~
\end{equation}
 The corresponding \emph{relative} $\mathcal{L}_{\infty}$ output error 
 is defined as 
\begin{equation}   \label{etLinf}
 \|e \|_{\mathcal{L}_{\infty}(T)} = \frac{\|y - y_r\|_{\mathcal{L}_{\infty}(T)}}{\|y\|_{\mathcal{L}_{\infty}(T)}}.~~
 \end{equation}
 \subsection{New England test system}
The  model has order $n=39$ in the original second-order coordinates~\eqref{eq:So_Dyn}. We choose  $\mu = 10^{-3}$ in \eqref{eq:Atildemu}. 
We apply both formulations of {StrH2 in Algorithm \ref{alg:proposed_meth} (StrH2-A and StrH2-B), Str-QBT}, and POD; and reduce the order to $r = 2,\dots\,25$. 
In Figure \ref{fig:comparison_u_1}, the relative $\mathcal{L}_{\infty}$ error  
$\|e \|_{\mathcal{L}_{\infty}(T)}$ with $T=10$ vs. reduction order $r$
is depicted for each method. {The figure shows that for the majority of $r$ values, {StrH2-A} yields the lowest relative $\mathcal{L}_\infty$ error.} We note that this is the best case scenario for POD since the reduced model is tested with the same input that is used to train POD. 
\graphicspath{{figures/}} 
\begin{figure}[!t]
\centering
  \includegraphics[width=2.5in]{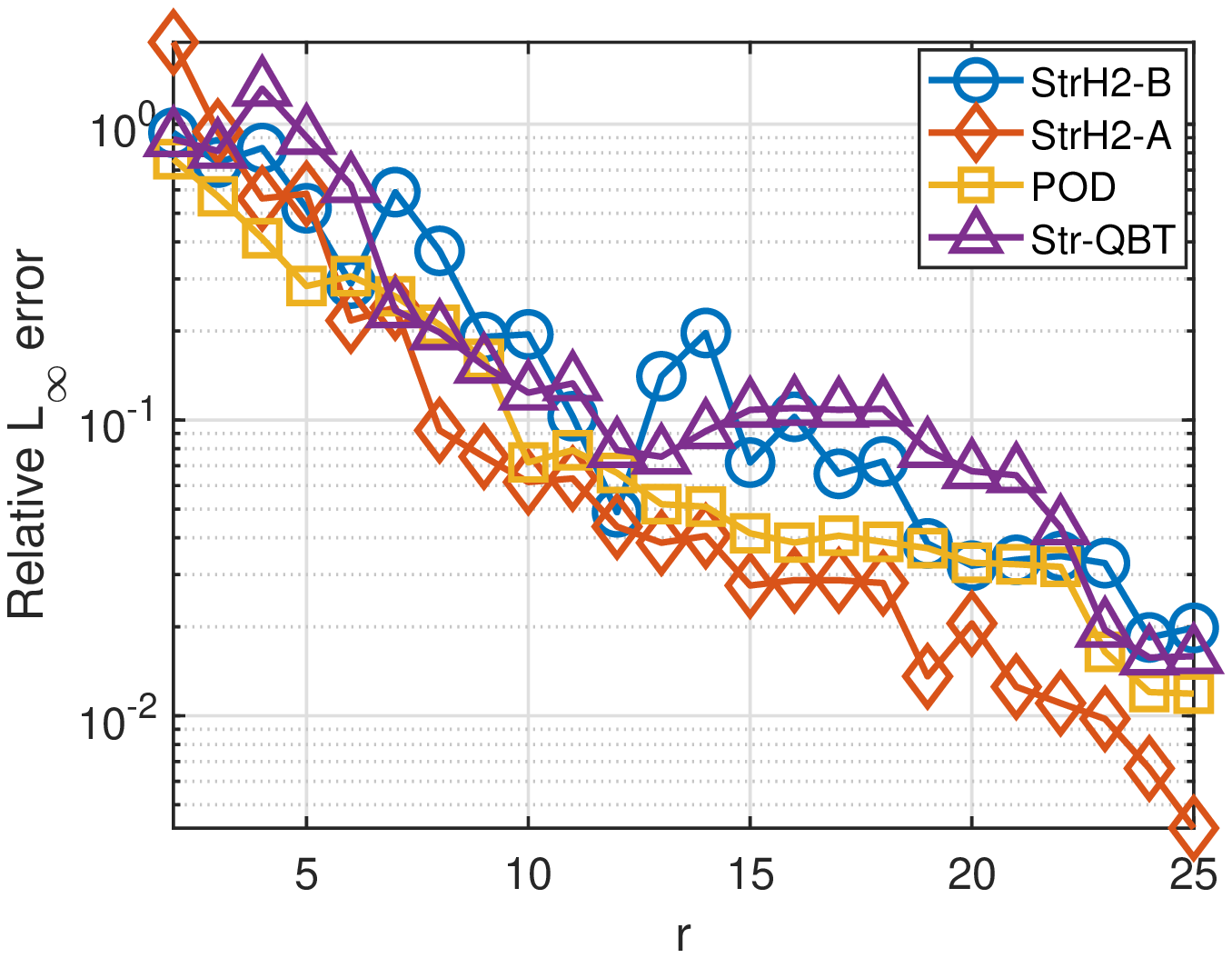}
  \caption{Relative $\mathcal{L}_{\infty}$ error vs. reduction order}
  \label{fig:comparison_u_1}
  \vspace{-1em}
\end{figure}

For a more detailed comparison, in Figure \ref{fig:error_output_u_1b}, the 
 output of the original and all the reduced order models of order $r=23$ (at which the relative error due to {StrH2-A} drops below $1\%$), and the corresponding absolute error, $\mid y(t)-y_r(t)\mid$, are depicted. 
  This figure supports the earlier observation that  {StrH2-A} approximates the original model with a better accuracy compared to  POD and {Str-QBT}.
\graphicspath{{figures/}} 
\begin{figure}[!t]
\centering
  \includegraphics[width=2.5in]{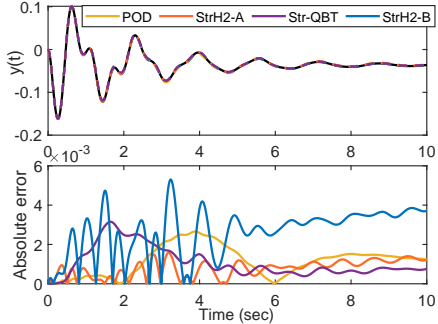}
  \caption{Original and the reduced order models output and the corresponding absolute error vs. time for $r=23$}
  \label{fig:error_output_u_1b}
  \vspace{-1em}
\end{figure}
To investigate the robustness of the algorithms to  variations in the input, we slightly perturb the input and repeat the same model reduction procedures as before. One might view this perturbation to the input as if the operation conditions are slightly varied. We 
perturb the input only by $0.1\%$.
Note that the choice of input has never entered the proposed algorithm or {Str-QBT}; however it is used to train the POD model. Figures \ref{fig:comparison_u_1.001} shows the relative error as $r$ varies for this slightly varied input. As in the previous case, {StrH2-A} outperforms the other methods. We also observe that the relative errors for the POD reduced models stagnate after $r=13$, i.e., not much further improvements, while {StrH2-A and StrH2-B}, and {Str-QBT}  approximate the original model with almost the same  accuracy as in the earlier case. This is expected since the input never entered the model reduction process of the 
{StrH2-A, StrH2-B or Str-QBT}. Thus, for this \emph{small} input variation case, those reduced models provide accurate approximations independent of the input selection as in the linear case.  
\graphicspath{{figures/}} 
\begin{figure}[!t]
\centering
  \includegraphics[width=2.5 in]{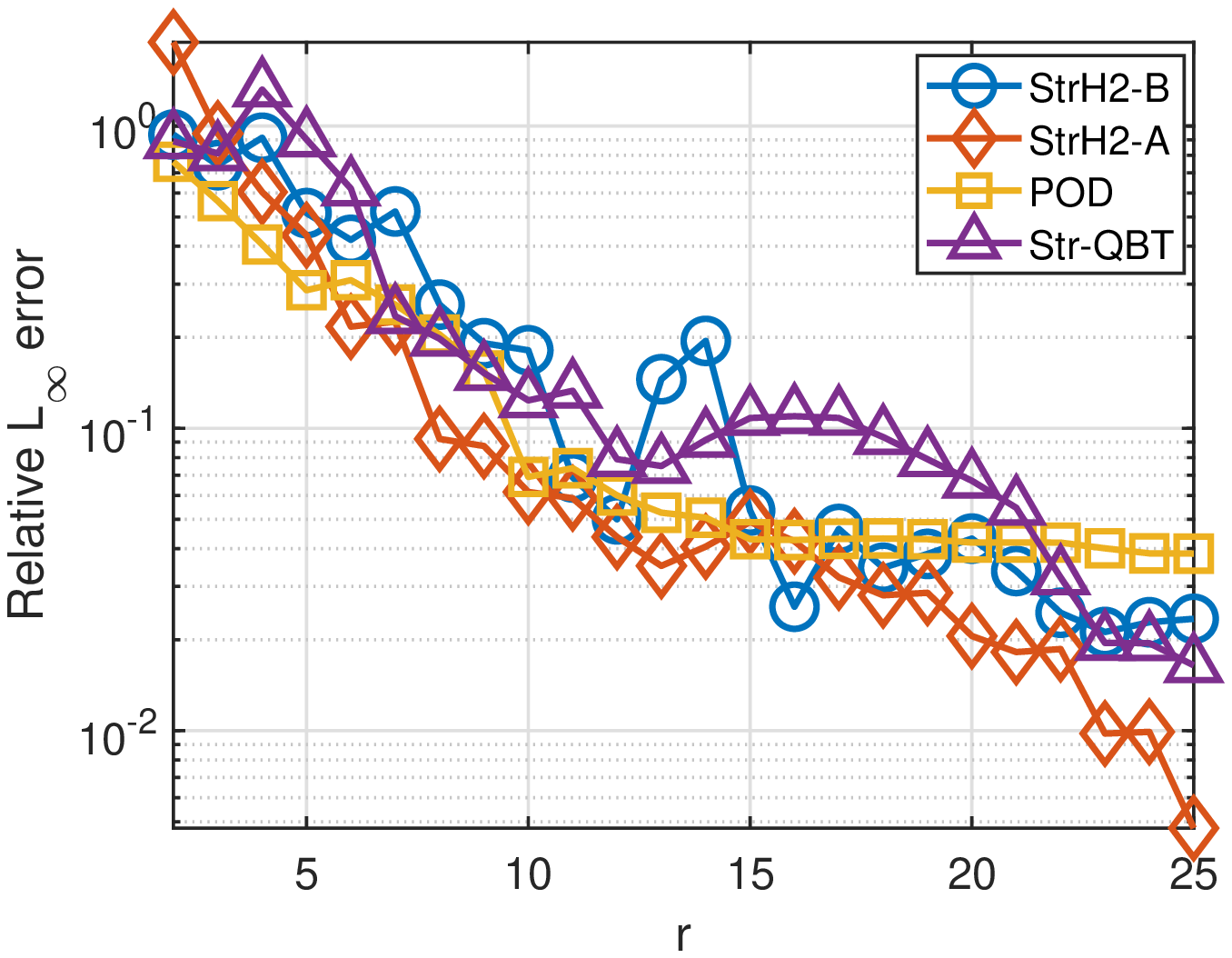}
  \caption{Relative $\mathcal{L}_{\infty}$ error vs. reduction order}
  \label{fig:comparison_u_1.001}
  \vspace{-1em}
\end{figure}

\subsection{IEEE 118 bus system}
The second-order model \eqref{eq:So_Dyn} in this case has $n = 118$. We choose {$\mu = 10^{-2}$} and  apply {StrH2-A, StrH2-B} and POD for $r = 2,\dots\,10$. We use $T=[0~3]$. {Str-QBT} has been skipped  due to its poor performance in this example. The relative $\mathcal{L}_{\infty}$ output error \eqref{etLinf} vs. $r$ is depicted in Figure \ref{fig:RelError_Case118_u1}. Although POD yields a lower $\mathcal{L}_{\infty}$ error (the testing input is the same as the training input for POD), {StrH2-A} has an acceptable accuracy as POD.
\graphicspath{{figures/}} 
\begin{figure}
     \centering
     \begin{subfigure}[b]{0.5\textwidth}
         \centering
         \includegraphics[width=2.5in]{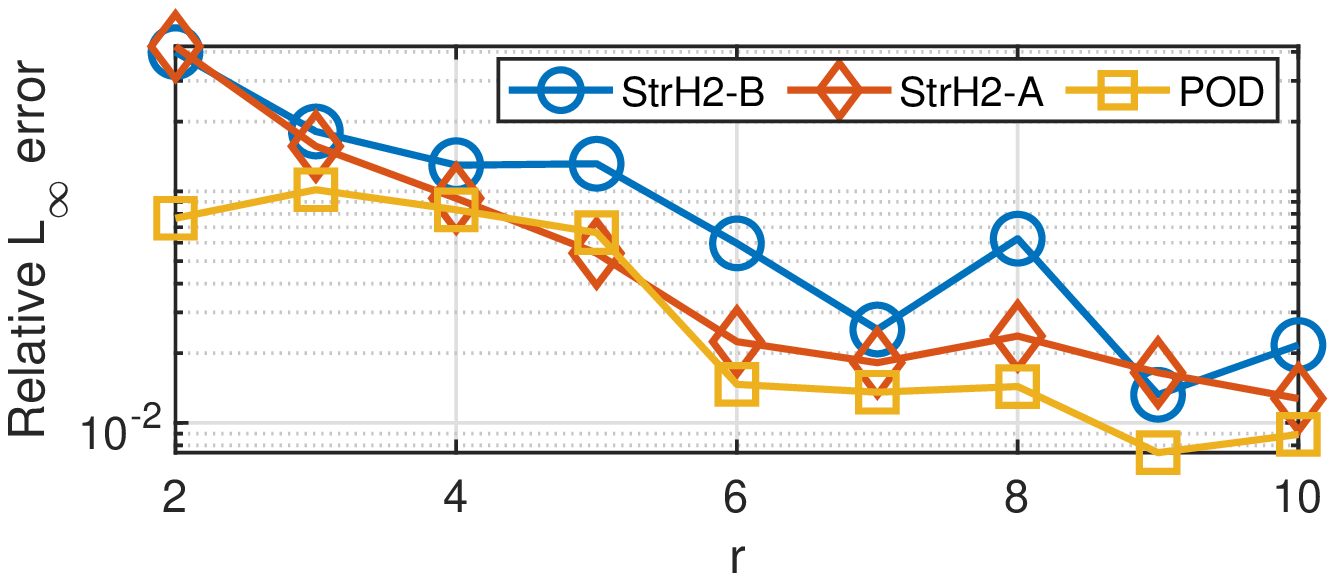}
         \vspace{-.8em}
         \caption{$u$ is not perturbed}
         \label{fig:RelError_Case118_u1}
     \end{subfigure}
     \vfill
     \begin{subfigure}[b]{0.5\textwidth}
         \centering
         \includegraphics[width=2.5in]{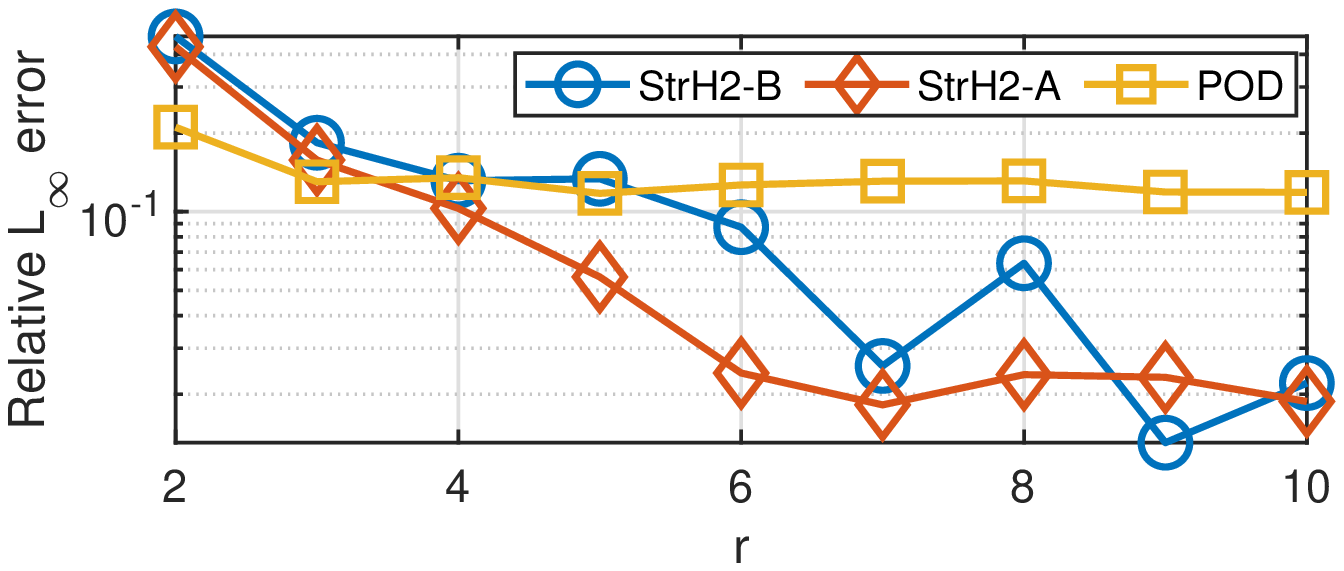}
         \vspace{-.8em}
       \caption{$u$ is perturbed}
       \label{fig:RelError_Case118_u1.005}
     \end{subfigure}
     \caption{Relative $\mathcal{L}_{\infty}$ error vs. reduction order}
       \vspace{-1em}
\end{figure}
As in the previous example, to test the robustness of the algorithms in the presence of a variation in the input, we slightly perturb the input $u$ by $0.5\%$. As one can observe in Figure \ref{fig:RelError_Case118_u1.005}, both {StrH2-A and StrH2-B} outperform POD. Also as in the previous example, the relative errors for the POD reduced models  do not show further improvements after $r=3$, while both {StrH2} formulations approximate the original model with similar accuracy as in the former case.

\section{Conclusion}\label{sec:conclusion}
{We have developed a structure-preserving system-theoretic MOR technique for nonlinear power grid networks. We lifted the original nonlinear model to its equivalent quadratic form in order to employ Q-IRKA as an intermediate stage of our MOR framework. We have shown that the model reduction bases obtained by Q-IRKA have a specific subspace structure that can be exploited to construct the desired reduction basis for reducing the original nonlinear model. This reduction basis has led to a reduced model that preserved the physically meaningful second-order structure of the original model. We have illustrated the effectiveness of our proposed approach via two numerical examples. 
}

\bibliographystyle{plain}
\bibliography{SafaeeGugercin}

\end{document}